\documentclass[11pt,letterpaper]{article}

\usepackage[margin=1in]{geometry}

\usepackage[utf8]{inputenc} %
\usepackage[T1]{fontenc}    %
\usepackage[colorlinks=true, linkcolor=blue, citecolor=green!60!black]{hyperref}       %
\usepackage{url}            %
\usepackage{booktabs}       %
\usepackage{amsfonts}       %
\usepackage{microtype}      %
\usepackage{xcolor}         %
\usepackage{nicefrac}
\usepackage[ruled,noend]{algorithm2e}
\usepackage{amssymb,amsthm,amsmath,amssymb,wrapfig,dsfont,authblk,mathrsfs,enumerate}
\usepackage{natbib}
\usepackage{parskip}
\usepackage{authblk}

\usepackage{booktabs} %
\usepackage[ruled]{algorithm2e} %

\SetAlFnt{\small}
\SetAlCapFnt{\small}
\SetAlCapNameFnt{\small}
\SetAlCapHSkip{0pt}
\IncMargin{-\parindent}

\newtheorem{theorem}{Theorem}[section]

\newtheorem{claim}[theorem]{Claim}

\newtheorem{proposition}[theorem]{Proposition}
\newtheorem{definition}[theorem]{Definition}
\newtheorem{remark}[theorem]{Remark}
\newtheorem{example}[theorem]{Example}

\newtheorem{observation}[theorem]{Observation}

\SetCommentSty{mycommfont}

\usepackage{amssymb,amsthm,amsmath,amssymb,wrapfig,dsfont,authblk,mathrsfs,enumerate,enumitem}
\usepackage{cleveref}
\usepackage{multirow,hhline}
\usepackage{natbib}
\usepackage{parskip}
\usepackage{caption}
\usepackage{colortbl}

\crefname{lemma}{lemma}{lemmas}
\Crefname{Lemma}{Lemma}{Lemmas}

\crefname{ineq}{inequality}{inequalities}
\Crefname{Ineq}{Inequality}{Inequalities}
\creflabelformat{ineq}{#2{\upshape(#1)}#3} 
\creflabelformat{Ineq}{#2{\upshape(#1)}#3} 

\crefname{definition}{definition}{definitions}
\Crefname{definition}{Definition}{Definitions}
\crefname{prop}{proposition}{propositions}
\Crefname{Prop}{Proposition}{Propositions}
\crefname{claim}{claim}{Claims}
\Crefname{claim}{Claim}{Claims}

\crefrangeformat{equation}{(#3#1#4) --~(#5#2#6)}
\crefmultiformat{equation}{(#2#1#3)}{, (#2#1#3)}{, (#2#1#3)}{, (#2#1#3)}
\crefname{lemma}{Lemma}{Lemmas}
\crefname{section}{Section}{Sections}
\crefname{subsubsubsection}{Section}{Sections}
\crefname{remark}{Remark}{Remarks}
\crefname{figure}{Figure}{Figures}
\crefname{table}{Table}{Tables}
\crefname{theorem}{Theorem}{Theorems}
\Crefname{theorem}{Theorem}{Theorems}
\crefname{algo}{Algorithm}{Algorithms}

\newcommand{\BR}{\mathsf{BR}}

\newcommand{\Secref}[1]{\hyperref[#1]{Section \ref*{#1}}}
\newcommand{\Appref}[1]{\hyperref[#1]{Appendix \ref*{#1}}}

\newcommand{\T}{\mathcal{T}}

\newcommand{\bbR}{\mathbb{R}}

\newcommand{\xhdr}[1]{\vspace{0mm} \noindent{\bf #1}}

\newcommand{\cD}{\mathcal{D}}
\newcommand{\cA}{\mathcal{A}}

\newcommand{\calG}{\mathcal{G}}

\newcommand{\squishlist}{
 \begin{list}{$\bullet$}
  { \setlength{\itemsep}{0pt}
     \setlength{\parsep}{3pt}
     \setlength{\topsep}{3pt}
     \setlength{\partopsep}{0pt}
     \setlength{\leftmargin}{1.5em}
     \setlength{\labelwidth}{1em}
     \setlength{\labelsep}{0.5em} } }

\newcommand{\squishlisttwo}{
 \begin{list}{$\bullet$}
  { \setlength{\itemsep}{0pt}
    \setlength{\parsep}{0pt}
    \setlength{	opsep}{0pt}
    \setlength{\partopsep}{0pt}
    \setlength{\leftmargin}{2em}
    \setlength{\labelwidth}{1.5em}
    \setlength{\labelsep}{0.5em} } }

\newcommand{\squishend}{
  \end{list}  }

\newcommand{\x}{\mathbf{x}}
\newcommand{\y}{\mathbf{y}}

\newcommand{\calA}{\mathcal{A}}
\newcommand{\cG}{\mathcal{G}}

\newcommand{\pstar}{p^{\star}}

\DeclareMathOperator*{\Ex}{\mathbb{E}}
\DeclareMathOperator*{\E}{\mathbb{E}}

\DeclareMathOperator*{\argmax}{\arg\!\max}

\newcommand{\stackval}{\text{StackVal}}

\newcommand{\pone}{\mathrm{P}_{\!1}}
\newcommand{\ptwo}{\mathrm{P}_{\!2}}
\newcommand{\playeri}{\mathrm{P}_{\!i}}
\newcommand{\csp}{\mathsf{CSP}}
\newcommand{\xstar}{x^\star}
\newcommand{\vxstar}{\mathbf{x}^\star}

\newcommand{\vystar}{\mathbf{y}^\star}
\newcommand{\1}[1]{\mathds{1}\left\{#1\right\}}
\newcommand{\regret}{\text{regret}}
\newcommand{\swapregret}{\text{swap-regret}}

\newcommand{\vx}{\mathbf{x}}
\newcommand{\vy}{\mathbf{y}}

\setlength{\parindent}{16pt}
\IncMargin{-\parindent}

\begin{document}
\title{Is Knowledge Power?\\ On the (Im)possibility of Learning from Strategic Interactions}
\author[1]{Nivasini Ananthakrishnan}
\author[1]{Nika Haghtalab} %
\author[2]{Chara Podimata} %
\author[1]{Kunhe Yang} %

\affil[1]{UC Berkeley\\
\texttt{\{nivasini,nika,kunheyang\}@berkeley.edu}}
\affil[2]{MIT \&  Archimedes AI\\
\texttt{podimata@mit.edu}}

\date{}

\maketitle

\allowdisplaybreaks
\begin{abstract}
    When learning in strategic environments, a key question is whether agents can overcome uncertainty about their preferences to achieve outcomes they could have achieved absent any uncertainty. Can they do this solely through interactions with each other? We focus this question on the ability of agents to attain the value of their Stackelberg optimal strategy and study the impact of information asymmetry.
We study repeated interactions in fully strategic environments where players' actions are decided based on learning algorithms that take into account their observed histories and knowledge of the game.
We study the pure Nash equilibria (PNE) of a meta-game where players choose these algorithms as their actions. 
We demonstrate that if one player has perfect knowledge about the game, then any initial informational gap persists. That is, while there is always a PNE in which the informed agent achieves her Stackelberg value, there is a game where no PNE of the meta-game allows the partially informed player to achieve her Stackelberg value.
On the other hand, if both players start with some uncertainty about the game, the quality of information alone does not determine which agent can achieve her Stackelberg value.
In this case, the concept of information asymmetry becomes nuanced and depends on the game's structure. Overall, our findings suggest that repeated strategic interactions alone cannot facilitate learning effectively enough to earn an uninformed player her Stackelberg value. 
\end{abstract}

\section{Introduction}

Learning to act in strategic environments is fundamental to the study of decision making under uncertainty in a wide range of applications, such as security, economic policy, and market design (e.g.,~\citep{letchford2009learning, blum2014learning, GreenSGs}). In these environments, acting and learning are intimately connected: agents' actions and the reactions they elicit generate payoffs, and help clarify the latent preferences of other agents.
A central question is whether, through repeated interactions alone, agents (aka \emph{players}) can overcome uncertainty about each other's preferences in order to achieve outcomes they could have achieved in the absence of uncertainty.
An extensive line of work on \emph{learning in Stackelberg Games}~\citep{balcan2015commitment,blum2014learning,letchford2009learning,roth2016watch} has focused on answering this question for achieving the \emph{Stackelberg value},
which is the optimal payoff a player guarantees herself when assuming other players will best respond to her actions.
While a player who hopes to attain her Stackelberg value in a one-shot game must know the game (i.e., know the utilities of all players), this line of work asks whether a player who is a-priori uninformed can overcome her lack of knowledge and attain her Stackelberg value through repeated interactions with other players.

By and large, existing works have studied this question by constructing learning algorithms for uninformed players that attain their  Stackelberg value, through repeated interactions with other players who myopically best respond.
While these results are encouraging for learning about the preferences of well-behaved best responding agents\footnote{Beyond myopic best-responding, several other types of algorithms for the responding agent have been considered, such as 
gradient descent~\citep{fiez2019convergence,zrnic2021leads},
no-regret~\citep{Goktas2022robust,brown2024learning},
no-swap regret~\citep{brown2024learning}, and responding to calibrated forecasts~\citep{haghtalab2024calibrated}; 
none of these focuses on how the other player's learning dynamics may be affected as the result of the second player's actions. 
},
they do not provide clear evidence of the ability of uninformed players to learn from strategic interactions alone.
Indeed, the two players' different attitudes towards their outcomes --- namely, one player planning a long-term strategy to maximize her long-term payoff while the other player responding without considering the impact of her actions on her long-term payoff --- confounds the overall impact uncertainty may have on how well players learn from strategic interactions; leaving one to wonder whether it was the lack of rational long-term planning on the part of one agent or some genius of the learning algorithm employed by the other agent that enabled her to learn from strategic interactions.

In this paper, we revisit the problem of learning in strategic environments with a renewed focus on the impact of information asymmetry between two equally rational players who aim to maximize their total payoff. We ask again: \emph{can an uninformed player learn to attain the value of her Stackelberg outcome, through repeated interactions alone?}
In contrast to the aforementioned results, our findings largely imply that \emph{strategic interactions alone cannot facilitate learning effectively enough to earn an uninformed player the value of her optimal strategy.}

\xhdr{Our Model and Contributions.}
To study the impact of informativeness on the ability of players to gain the payoff of their Stackelberg outcome, we study repeated interactions between two rational agents
playing repeatedly a one-shot game $G\sim \cD$. While the players know $\cD$, they may not know the realized game $G$ except perhaps through signals of differing precision about $G$. For example, one player may know $G$ and another may have access to a signal that reveals $G$ with probability $0.5$ and is uninformative (i.e., independently drawn from $\cD$) otherwise.
Each player deploys an algorithm, which specifies her actions at every round, given all that the player has observed so far (e.g., history of actions and/or utilities experienced) and the information she possesses about $G$. We consider a meta-game where players' actions are algorithms that specify agent's strategies in $T$ rounds of interactions and study pairs of algorithms that form pure Nash Equilibria in the meta-game.
We use the overall utility attained by pairs of algorithms that form pure Nash Equilibria to draw a clear separation between the informed and uninformed players' ability to attain the value of their Stackelberg optimal strategies.

In the following, we use $\stackval_i(G)$ to refer to player $i$'s value of her optimal Stackelberg strategy in game $G$.
We summarize our results as follows.

\begin{itemize}
\item In Section~\ref{sec:p1-fully-informed}, we study the \emph{full information asymmetry} when player 1 ($\pone$)  knows the realized game $G$ and player 2 ($\ptwo$) only has partial information based on an imperfect signal. We show a full separation in the achievable utilities.
In particular, we show in \Cref{fullAsymClaim1} that for every distribution $\cD$ and realized game $G$, there is a pure Nash Equilibrium (PNE) in the meta game induced between the algorithms' of the two players for which $\pone$ achieves her Stackelberg value, i.e., $\stackval_1(G)$.
On the other hand, \Cref{AgentSignalLowerBound} gives a distribution $\cD$ such that no PNE of the meta game allows $\ptwo$ to achieve $\E_{G \sim \cD}[\stackval_2(G)]$. In other words, for some realized game $G$ and all PNE of the meta-game, $\ptwo$ cannot achieve the value of her optimal Stackelberg strategy for $G$.

Taken together, \Cref{fullAsymClaim1} and \Cref{AgentSignalLowerBound} establish that learning through interactions alone is not sufficient to allow an uninformed player (in this case, $\ptwo$) to attain the value of her optimal Stackelberg strategy. 
Does this mean that $\ptwo$ in unable to \emph{learn} the game matrix through interactions that form a PNE in the meta-game? This is not necessarily the case (see  \Cref{obs:atEquilibriumSuccessfulLearning}) as indeed $\ptwo$ may be able to learn the underlying game $G$ eventually. What our results imply is that in every PNE, either $\ptwo$ never learns the game $G$ sufficiently well, or she has enough information to identify $G$ but the stability condition for her algorithm to be in a PNE does not allow her to extract the value of her optimal Stackelberg strategy. This points to the limitations on what agents can achieve if their only source of learning is through repeated interactions.

\item In Section~\ref{sec:partial-asym}, we study a setting where neither player fully knows the realized game $G$. Interestingly, a separation need not hold in this case. In particular, there are distributions $\cD$ where the player with a less informative signal about $G$ is able to extract her benchmark $\E_{G \sim \cD}[\stackval(G)]$ while the player with the more informative signal cannot achieve her corresponding benchmark. This occurs when the less-informed player is able to learn the identity of $G$ more efficiently, possibly due to the structure of $\cD$. 
This is perhaps not surprising, given that a less-informed player can become more informed or even perfectly aware of the realized game faster, while the player who started the meta-game with a more informative signal continues to remain only partially informed.

\end{itemize}

\paragraph{On the possibility and impossibility of learning from strategic interactions.}

We view our work as providing a different lens on studying learnability in the presence of strategic interactions that also elucidates the context and subtleties of a vast line of prior work in this space. By and large, prior work in this space~\citep{letchford2009learning,blum2014learning,balcan2015commitment,roth2016watch,camara2020mechanisms,Goktas2022robust,fiez2020implicit,
collina2023efficient,zhao2023online,haghtalab2024calibrated,brown2024learning} has attempted to establish the following message: ``An uninformed player can always learn to achieve (even surpass) its Stackelberg value through repeated strategic interactions alone''.
At a high level, our work demonstrates the opposite, that ``In some cases, an uninformed player cannot learn, through repeated interactions alone, to achieve its Stackelberg value''. Of course, these messages, while both technically correct, are contrary to each other. So, what accounts for this difference?

One of our takeaways is that prior work's findings (that an uninformed can always overcome her informational disadvantage through repeated strategic interactions) heavily hinges on the lack of rationality of at least one of the agents in those strategic interactions.
That is, the dynamics studied in prior work involve pairs of agent algorithms that are not best-responses to each other.
On the other hand, our work shows that the inherent uncertainty about the game --- or more precisely, the information asymmetry between two equally rational agents --- can persists throughout repeated interactions and makes it impossible for an uninformed agent to overcome her informational disadvantage.

The processes of learning and acting based on the learned knowledge are naturally intertwined when dealing with uncertainty in strategic environments.
Our work implies that it is precisely because of their intertwined nature that an uninformed agent cannot overcome her informational disadvantage from strategic interactions alone. That is, information disadvantage between a pair of rational agents persists for one of two reasons: Either actions taken by the agents' algorithms do not reveal enough information to identify the game at play, or if they do, the less-informed agents use of the elicited information would have lead the informed agent to deviate to an algorithm that barred her from learning in the first place.

\subsection{Related Works}
\xhdr{Algorithms and benchmarks for repeated principal-agent interactions. }
There is a vast literature investigating online algorithms and benchmarks in repeated games with agents under various behavioral models such as: 1) myopically best-responding~\citep{letchford2009learning, balcan2015commitment,blum2014learning, zhao2023online}; 2) optimizing time-discounted utilities~\citep{haghtalab2022learning,hajiaghayi2024regret,amin2013learning,abernethy2019learning}; 3) employing no-regret~\citep{braverman2018selling,deng2019strategizing,fiez2020implicit,Goktas2022robust, brown2024learning,guruganesh2024contracting,donahue2024impact}, no-swap-regret~\citep{deng2019strategizing,mansour2022strategizing,brown2024learning}, no counterfactual-internal-regret algorithms~\citep{camara2020mechanisms,collina2023efficient}, or online calibrated forecasting algorithms~\citep{haghtalab2024calibrated}.

Given a particular model of the agent, what is the optimal algorithm to employ? This has been studied in both the complete and incomplete information setting. In the complete information setting, the static algorithm of playing the optimal Stackelberg strategy is shown to be optimal against no-swap-regret agents~\citep{deng2019strategizing, haghtalab2024calibrated}. But it is not necessarily optimal against general no-regret algorithms, including common algorithms such as EXP3~\citep{braverman2018selling,deng2019strategizing, guruganesh2024contracting,rubinstein2024strategizing}. Additionally, it is not optimal against no-swap-regret agents in Bayesian games where agents have hidden information but is optimal if agents satisfy a stronger notion called no-polytope-swap-regret~\citep{mansour2022strategizing}.

\xhdr{Long-term rationality of agents in the meta-game.}
Instead of modeling agents as no-regret learners, another line of research treats the repeated game as a \emph{meta-game} in which players' actions are their choice of \emph{algorithms}.
Towards understanding the PNE of this meta-game, \citet{brown2024learning} show that no pair of no-swap-regret algorithms can form a PNE unless the stage game 
has a PNE. Previous work discussed above on optimally responding to no-regret agents also has implications on the meta-game's PNE such as (1) no-swap-regret algorithms are supported in a meta-game PNE for all games $G$~\citep{deng2019strategizing}, and (2) there are games where no meta-game PNE contains certain common regret-minimizing algorithms such as EXP3~\citep{brown2024learning}. We discuss these implications in \Cref{implicationsPriorWorks}. 

\citet{kolumbus2022and} study a meta-game where players are restricted to choose no-regret algorithms but have the option to manipulate their private information. They show that non-truthful PNE exists in multiple classes of games.
Besides PNE, some previous works also study Stackelberg strategies of meta-games~\citep{arunachaleswaran2022efficient,zuo2015optimal}.
    Recently, \citet{arunachaleswaran2024pareto} study the Pareto optimality relative to all possible games instead of exact optimality in a particular game. 

\xhdr{Information asymmetry in repeated games.} The final line of related work is the substantial literature on information asymmetry and repeated interactions including classical work by~\citet{aumann1995repeated}. They also study repeated games between a player knowing the game and one who does not. For zero-sum games, they show that all PNE of the meta-game yields the same utility to the informed player and this utility can be higher than the informed player's one-shot utility. The higher utility is due to the informed player's ability to shape the learned beliefs of the uninformed player and this power of information is also shown in a recent line of work on follower deception in Stackelberg games
~\citep{gan2019imitative,nguyen2019imitative,gan2019manipulating, birmpas2020optimally,chen2023learning}.

\section{Model and Preliminaries}

We study games between two players, referred to as $\pone$ and $\ptwo$. Wlog, we assume that $\pone$ is generally \emph{more informed} than $\ptwo$ (to be defined formally below). Although we focus on \emph{repeated} games, we first provide the setting for one-shot games and then build upon it for repeated games.

\paragraph{Bayesian Games.} A (Bayesian) game $G \in \calG$ with prior $\cD\in\Delta(\cG)$ is a tuple $(\calA_1, \calA_2, U_1, U_2)$, where $\calA_i$ is $\playeri$'s discrete action space, and $U_i: \calA_1 \times \calA_2 \to \bbR$ is $\playeri$'s utility function ($i \in \{1,2\}$). When $\pone$ plays action $x \in \calA_1$ and $\ptwo$ plays action $y \in \calA_2$, then they receive utilities $U_1(x,y)$ and $U_2(x,y)$ respectively. We sometimes overload notation and write $U_1(x, y; G), U_2(x, y; G)$ to denote that the utilities of the two players come from a particular game instance $G$. Instead of pure strategies (i.e., playing discrete actions), the players can also choose to play \emph{mixed} strategies $\vx \in \Delta(\calA_1)$ and $\vy \in \Delta(\calA_2)$ for players 1 and 2 respectively. To simplify notation, we sometimes write $U_i(\vx, \vy)$ in place of $\E_{x \sim \vx, y \sim \vy} [U_i(x,y)]$ for $\playeri$'s utility. Unless specified otherwise, we assume that the players are moving \emph{simultaneously} and they both know the prior distribution $\cD$. We also assume that every game $G$ in the support of $\cD$ has no \emph{weakly dominated action} for either player. An action $x_0 \in \cA_1$ is weakly dominated for $\pone$ in $G$ if there exists $\vx \in \Delta(\cA_1\setminus \{x_0\})$ s.t., $U_1(\vx, y; G) \geq U_1(x_0, y; G)$ for every $y \in \cA_2$. 
The weakly dominance property of actions $y_0 \in \cA_2$ is defined symmetrically for $\ptwo$.

\paragraph{Optimistic Stackelberg Value.} The optimistic Stackelberg value of a game $G$ for $\pone$, denoted with $\stackval_1(G)$, is 
the optimal value of the following optimization problem:
\begin{align*}    \stackval_1(G)\triangleq\max_{\vxstar\in\Delta(\calA_1)}\max_{y\in\BR_{2}(\vxstar;G)}U_1(\vxstar,y),
\end{align*}
where $\BR_{2}(\vxstar;G)\triangleq\argmax_{y\in\calA_2}U_2(\vxstar,y)$ indicates $\ptwo$'s set of best responses to $\vxstar$. When there are multiple actions in $\BR_{2}(\vxstar;G)$, ties are broken \emph{optimistically} in favor of $\pone$. We use $(\vxstar(G),y(\vxstar;G))$ to denote the pair of strategies that achieves the value $\stackval_1(G)$. For $\ptwo$, the optimistic Stackelberg value $\stackval_2(G)$ and $(x(\vystar;G),\vystar(G))$ are defined symmetrically. Finally, we define $\stackval_i(\cD)\triangleq\E_{G\sim\cD}[\stackval_i(G)]$ to be the \emph{expected} optimistic Stackelberg value for $\playeri$ under the prior distribution $\cD$ ($i \in \{1,2\}$).

\paragraph{Game Information.} We assume that both players know the prior $\cD$. 
After the game $G\sim\cD$ is realized, each player $\playeri$ also receives additional information about the realization of $G$, which is characterized by a signal $s_i\in\cG$. 
We assume that nature generates both signals $s_1,s_2$ independently and with potentially different precision levels $p_1,p_2\in[0,1]$, and each player can only observe their own signal. Fixing a precision $p_i$, $s_i$ perfectly reveals the true game $G$ with probability $p_i$, and with probability $1-p_i$, it provides an independent draw from the prior distribution $\cD$. Formally, the conditional distribution of $s_i$ given $G$ is defined as follows:
\begin{align*}
        \forall G, s_i\in\calG,\quad \varphi_{p_i}(s_i\mid G)=p_i\cdot\1{s_i=G}+(1-p_i)\cdot\cD(s_i).
    \end{align*}
While each player can only observe their own signal $s_i$, we assume that the distributions generating both signals are common knowledge, i.e., both players know $p_1$ and $p_2$.
Note that when $p_i=1$, the signal $s_i$ perfectly correlates with the realization $G$, in which case we say that $\playeri$ is \emph{fully-informed} or have \emph{perfect knowledge} about which game is being played. On the other extreme, if $p_i=0$, then the signal $s_i$ reveals no additional information compared to the prior distribution $\cD$. In this case, we call $\playeri$ \emph{uninformed}. Throughout this paper, we focus on settings with \emph{information asymmetry} where we always assume $\pone$ is more informed than $\ptwo$, i.e., $p_1>p_2$.

\paragraph{Repeated Games.} In this paper, we focus on repeated (Bayesian) games. Initially, nature draws a game $G \in \mathcal{G}$ from prior $\cD$. {The game is then fixed and repeated for $T$ rounds.} At each round $t \in [T]$, $\pone$ and $\ptwo$ play strategies $\vx^t, \vy^t$ and obtain utilities $U_1(\x^t, \y^t;G), U_2(\x^t, \y^t;G)$ respectively. We call $G$ the \emph{stage game} of the repeated interaction.

Without loss of generality, we use \emph{algorithms} to describe both players' adaptive strategies in the repeated game.
For $i\in\{1,2\}$, we use $\pi_i$ to denote the algorithm used by $\playeri$, which is a sequence of mappings $(\pi_i^{t})_{t\in[T]}$ that at each round maps from player $i$'s information about the game and historical observations to the distribution of mixed strategies from which the next strategy is drawn.
Specifically, for each round $t$, the mapping is defined as $\pi_i^{t}:(s_i; H_i^{1:t-1}) \mapsto \Delta(\x_t)$, where $s_i$ is the signal received by $\playeri$ about the realization of $G$, and $H_i^{r}$ is the feedback that $\playeri$ observed at round $r$ ($r\in [t-1]$). 
When both players observe each other's realized strategies as well as their own (but not the other's) realized utilities, we have $H_i^{r}=(\x^r,\y^r, U_i(\x^{r},\y^{r};G))$. We call this the \emph{full-information feedback} setting.
We also consider the \emph{bandit feedback} setting, where the players do not observe the strategies of their opponent, i.e., $H_1^{r}=(\x^r, U_1(\x^{r},\y^{r};G))$ and $H_2^{r}=(\y^r, U_2(\x^{r},\y^{r};G))$.

\paragraph{Trajectories and expected utilities.}
Consider a fixed pair of algorithms $(\pi_1,\pi_2)$. Under every realization of $(G,s_1,s_2)$, algorithms $(\pi_1,\pi_2)$ induce a distribution over trajectories of mixed strategy pairs of length $T$, which we denote with $(\x^t,\y^t)_{t\in[T]}\sim\T^T(\pi_1,\pi_2;G,s_1,s_2)\in\Delta(\Delta(\calA_1)^T\times\Delta(\calA_2)^T)$. In particular, the signals $s_i$ ($i\in\{1,2\}$) are inputs of $\pi_i$ that specify $\playeri$'s behavior upon receiving certain feedbacks, whereas $G$ influences $\playeri$'s observed utilities, which is part of the feedback and indirectly influences $\playeri$'s strategies of the next round.\footnote{Our results also hold in an alternative setting where both players can only play pure strategies at each round, i.e., $x^t\sim\x^t$ and $y^t\sim \y^t$. For the feedback $H_i^r$, the realized utilities and the observations about opponent's strategy should both be defined in terms of $(x^t,y^t)$ instead of $(\x^t,\y^t)$. As a result, $\tau(\pi_1,\pi_2;G,s_1,s_2)$ becomes a distribution over \emph{pure-strategy} trajectories $(x^t,y^t)_{t\in[T]}$ instead of mixed strategies $(\x^t,\y^t)_{t\in[T]}$.} 
We also use $\T^T(\pi_1,\pi_2;G)$ to denote the mixture of $\T^T(\pi_1,\pi_2;G,s_1,s_2)$ as $s_i\sim\varphi_{p_i}(\cdot\mid G)$ for $i\in\{1,2\}$.

When the realized game is $G$ and players use algorithms $(\pi_1,\pi_2)$ with time horizon $T$, the \emph{expected average utility} of $\playeri$ under $G$, denoted as $\Bar{U}_i(\pi_1,\pi_2;G)$, can be expressed as
\begin{align*}
    \Bar{U}_i^T(\pi_1,\pi_2;G)\triangleq&\E_{
    \tau\sim \T^T(\pi_1,\pi_2;G)
    } \left[\frac{1}{T}\sum_{t \in [T]} U_i(\x^t,\y^t; G) \right].
\end{align*}
We further define $\Bar{U}_i^T(\pi_1,\pi_2;\cD)\triangleq\E_{G\sim \cD}\Bar{U}_i^T(\pi_1,\pi_2;G)$ as the expected average utility under $\cD$.

\paragraph{Equilibrium in the Meta-Game.} 
We model the rationality of long-term players by treating the repeated Bayesian game $\cD$ as a meta-game, where each player $\playeri$'s action is an algorithm $\pi_i$, and the utilities of each pair of action $(\pi_1,\pi_2)$ are given by $\Bar{U}_i(\pi_1,\pi_2;\cD)$. Our analysis focuses on the \emph{pure Nash equilibria} (PNE) of this meta game applied to the asymptotic regime $T\to\infty$.

\begin{definition}[PNE of the Meta-Game] 
\label{def:PNE}
We say that a pair of algorithms $(\pi_1, \pi_2)$ form a pure Nash equilibrium (PNE) in the meta-game if for all $i\in\{1,2\}$ and all other algorithms $\pi_i'$,  
     \begin{align*}
     \limsup_{T\to\infty}\left(\Bar{U}_i^T(\pi_i',\pi_{-i};\cD)-\Bar{U}_i^T(\pi_i,\pi_{-i};\cD)\right)\le 0,
     \end{align*}
     where $\pi_{-i}$ denotes the algorithm of $\playeri$'s opponent.
\end{definition}

Finally, we define no-regret and no-swap regret algorithms below.

\begin{definition}[No-(Swap) Regret Algorithms]
    An algorithm $\pi_1$ of $\pone$ is called \emph{no-regret} if for all adversarial sequences $\vy^{1:T}\in\Delta(\calA_2)^T$, the strategies $\vx^{1:T}$ output by $\pi_1$ satisfies
    \begin{align*}
        \E[\regret^T_1]\triangleq
        \E\Big[\max_{\xstar\in\calA_1}\sum_{t \in [T]} U_1(\xstar,\y^t)
        -U_1(\x^t,\y^t)\Big]\in o(T).
    \end{align*}
    Furthermore, $\pi_1$ is called \emph{no swap-regret} if 
    \begin{align*}
\E[\swapregret_1^T]\triangleq\E\Big[\max_{f:\calA_1\to\calA_1}\sum_{t \in [T]} U_1(f(\x^t),\y^t)-U_1(\x^t,\y^t)\Big]\in o(T),
    \end{align*}
    where $f(\x)\in\Delta(\calA_1)$ denotes the mixed strategy induced by $f(x)$ as $x\sim\x$. We define no-(swap) regret algorithms for $\ptwo$ symmetrically.
\end{definition}
We remark that there exist no-regret and no-swap regret algorithms under both the full-information feedback and bandit feedback setting.

\section{Interactions between fully-informed $\pone$ and partially-informed $\ptwo$}
\label{sec:p1-fully-informed}
    In this section, we analyze the setting with a fully-informed $\pone$ --- i.e., $\pone$ knows the game $G$ being played --- and a partially informed $\ptwo$.
    In other words, algorithms $\pi_1$ and $\pi_2$ can each take an observable signal as input, where the signals received by $\pone$ and $\ptwo$ are independently drawn from signal distributions $\varphi_{p_1}(\cdot|G)$ and $\varphi_{p_2}(\cdot|G)$ with precision $p_1=1$ and $p_2<1$, respectively. Recall that each $\playeri$ sees her realized signal $s_i$ and knows the precision levels $p_1,p_2$ of both players' signals.

    Our main takeaway is that there is a separation in the benchmarks for achievable cumulative utilities between $\pone$ and $\ptwo$ in this setting, when $\pone$ and $\ptwo$ employ algorithms that form a PNE of the meta-game. This is not surprising in the one-shot setting. But in the repeated setting, even with infinite rounds for $\ptwo$ to learn the game from feedback gained throughout the interaction, we show that there is still a separation in achievable benchmarks.

    This separation could be due to two factors: 1) $\ptwo$'s inability to learn the game based on repeated interactions, and 2) $\ptwo$'s failure to achieve the benchmark utility despite successful learning. We discuss this in more detail in \Cref{sec:3-1} and \Cref{differenceInLearningExternalVs}. We show that if $\ptwo$ was able to learn the game based on external signals, then $\ptwo$ would be able to achieve the benchmark. This highlights a fundamental difference between learning based on interactions with the other player and learning independently without relying on the other player. In the latter scenario, a utility benchmark is always achievable, whereas in the former, it is sometimes unattainable. 

    The benchmark that we will show separates $\pone$ from $\ptwo$ is the average Stackelberg value with the player of interest as leader. Recall that $\stackval_i(\cD)=\E_{G \sim \cD}[\stackval_i(G)]$ for each $\playeri$. We will demonstrate the separation by showing that $\pone$ is always able to achieve this benchmark through a PNE of the meta-game, for all $\cD$, but there exists some distribution $\cD$ in which no PNE of the meta-game yields $\ptwo$ her counterpart benchmark.  
    
    We will first state the theorems and provide proof sketches later. 
    Our first theorem (\Cref{fullAsymClaim1}) asserts that $\pone$ \emph{can} achieve the benchmark $\stackval_1(\cD)$ by explicitly constructing a PNE pair of algorithms $(\pi_1,\pi_2)$ that grants $\pone$ this utility in the asymptotic regime. 

    \begin{theorem}[Benchmark achievable by $\pone$ for all $\cD$]\label{fullAsymClaim1}
            For every game family $\cG$ and every distribution $\cD\in\Delta(\cG)$ supported on it, there exists an algorithm pair $(\pi_1, \pi_2)$ such that $(\pi_1, \pi_2)$ is a PNE of the meta-game, and $\forall G\in\calG, \Bar{U}_1^T(\pi_1,\pi_2; G) \geq \stackval_1(G) - o_T(1)$. 
    
            That is, for every realized game $G$, the expected average utility of $\pone$ over $T$ rounds tends to $\stackval_1(G)$ as $T \to \infty$. The expectation is over the trajectories --- sequence of player strategies, and resulting utilities induced by the algorithms $\pi_1, \pi_2$ and $G$.
        \end{theorem}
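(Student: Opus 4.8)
The plan is to exhibit an explicit PNE in which the fully-informed $\pone$ commits to a slight perturbation of her Stackelberg-optimal strategy while $\ptwo$ runs a no-swap-regret algorithm. Concretely, since $p_1=1$, let $\pi_1$ be the \emph{oblivious} algorithm that, on observing $s_1=G$, plays the same mixed strategy $\vx_{\eps_T}(G):=(1-\eps_T)\,\vxstar(G)+\eps_T\,\vx'(G)$ in every round, where $\vxstar(G)$ is $\pone$'s optimistic Stackelberg commitment, $\vx'(G)$ is an auxiliary strategy defined below, and $\eps_T\to 0$ is a horizon-dependent step (taking $\eps_T=T^{-1/4}$ suffices). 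Let $\pi_2$ be an arbitrary no-swap-regret algorithm for $\ptwo$ (such algorithms exist in both feedback models, as noted above) that disregards the signal $s_2$. I would then (i) argue $\ptwo$'s best response to $\vx_{\eps_T}(G)$ is the unique optimistic best response $\ystar:=y(\vxstar(G);G)$, (ii) deduce $\pone$'s average utility tends to $\stackval_1(G)$, and (iii) verify the PNE conditions.

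For step (i), since $G$ has no weakly dominated action, $\ystar$ is not weakly dominated; a minimax argument applied to the auxiliary zero-sum game in which $\pone$ maximizes $\min_{y\ne\ystar}\bigl(U_2(\cdot,\ystar;G)-U_2(\cdot,y;G)\bigr)$ yields $\vx'(G)$ with $U_2(\vx'(G),\ystar;G)>U_2(\vx'(G),y;G)$ for every $y\ne\ystar$. Since $\ystar\in\BR_2(\vxstar(G);G)$, taking the convex combination gives $\BR_2(\vx_{\eps}(G);G)=\{\ystar\}$ with a best-response margin of order $\eps$, while $U_1(\vx_{\eps}(G),\ystar;G)=\stackval_1(G)-O_G(\eps)$. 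For step (ii): fixing $G$ and facing the constant sequence $\vx_{\eps_T}(G)$, the no-regret property of $\pi_2$ forces $\frac1T\sum_t U_2(\vx_{\eps_T}(G),\vy^t;G)\ge U_2(\vx_{\eps_T}(G),\ystar;G)-o_T(1)$; combined with the $\Theta_G(\eps_T)$ margin, $\ptwo$ must put average weight $O(o_T(1)/\eps_T)$ on responses $\ne\ystar$, so $\Bar{U}_1^T(\pi_1,\pi_2;G)\ge\stackval_1(G)-O_G(\eps_T+T^{-1/2}/\eps_T)=\stackval_1(G)-o_T(1)$, and averaging over $G\sim\cD$ gives $\Bar{U}_1^T(\pi_1,\pi_2;\cD)\ge\stackval_1(\cD)-o_T(1)$.

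For the PNE conditions (iii): because $\pi_1$ is oblivious, any deviation $\pi_2'$ faces the fixed sequence $\vx_{\eps_T}(G)$, so $\ptwo$'s average utility is at most $\E_{G\sim\cD}\bigl[\max_y U_2(\vx_{\eps_T}(G),y;G)\bigr]$, which the no-regret $\pi_2$ already attains up to $o_T(1)$; hence $\ptwo$ has no profitable deviation. For $\pone$: against the no-swap-regret $\pi_2$, in each realized game $G$ the trajectory induced by any deviation $\pi_1'$ has vanishing follower swap-regret with respect to $U_2(\cdot,\cdot;G)$, so by the standard upper bound on an optimizer's value against a no-swap-regret learner \citep{deng2019strategizing,haghtalab2024calibrated,brown2024learning}, $\Bar{U}_1^T(\pi_1',\pi_2;G)\le\stackval_1(G)+o_T(1)$; averaging over $\cD$ and combining with step (ii) gives $\limsup_T\bigl(\Bar{U}_1^T(\pi_1',\pi_2;\cD)-\Bar{U}_1^T(\pi_1,\pi_2;\cD)\bigr)\le 0$.

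The main obstacle is the optimistic tie-breaking: a generic no-swap-regret $\ptwo$ converges to \emph{some} best response but not necessarily the one favoring $\pone$, which is precisely why $\pi_1$ must commit to the strictly-separating $\vx_{\eps_T}(G)$ rather than to $\vxstar(G)$ itself. This forces a two-sided rate trade-off — the perturbation $\eps_T$ must vanish for $\pone$'s utility to reach $\stackval_1(G)$, yet slowly relative to $\ptwo$'s $O(T^{-1/2})$ regret so that her play still concentrates on $\ystar$ — and one must also confirm that the separating $\vx'(G)$ exists (this is exactly where the no-weakly-dominated-action assumption enters) and that the $o_T(1)$ error terms can be taken uniform over $\supp(\cD)$ when passing to expectations. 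The external ingredient, the bound on an optimizer against a no-swap-regret learner, is cited, but should be carefully instantiated per realized game in this Bayesian setting.
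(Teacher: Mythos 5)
Your proposal is correct and follows essentially the same route as the paper's own proof: perturb the Stackelberg commitment toward a separating strategy (whose existence follows from the no-weakly-dominated-action assumption, via Farkas/duality) so that the optimistic best response becomes strictly unique, pair it with a signal-ignoring no-swap-regret follower, trade off the perturbation rate against the follower's regret rate, and invoke the known cap on a leader's utility against a no-swap-regret learner \citep{deng2019strategizing,haghtalab2024calibrated} for $\pone$'s deviation condition. The only detail you omit is that the horizon-dependent perturbation $\eps_T$ must be made compatible with the asymptotic PNE definition over a single algorithm pair, which the paper handles with a standard doubling trick.
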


        The next theorem completes the separation argument by constructing a specific game distribution where no PNE of the meta-game allows $\ptwo$ to asymptotically achieve the benchmark $\stackval_2(\cD)$.
        \begin{theorem}[Benchmark unachievable by $\ptwo$ for some $\cD$]
        \label{AgentSignalLowerBound}
            For all thresholds $\pstar\in[0,1)$, there exists a game family $\cG$ and a distribution $\cD\in\Delta(\cG)$, s.t., $\forall p_2\le \pstar$, all PNE $(\pi_1,\pi_2)$ of the meta-game where $\ptwo$'s signal is of precision $p_2$ must suffer $\E_{G \sim \cD}\Bar{U}_2^T(\pi_1,\pi_2; G) \le \stackval_2(\cD) -\Omega_T(1)$.
            
            This implies that there is a game $G \in \cG$ such that when $G$ is realized, $\ptwo$'s expected average utility over $T$ rounds remains strictly bounded below $\stackval_2(G)$ even as $T \to \infty$.
        \end{theorem}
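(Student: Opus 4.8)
The plan is to design a distribution $\cD$ over a small family of games where two ``obstacles'' conspire against $\ptwo$. The games should be indistinguishable to $\ptwo$ from her own feedback unless $\pone$ chooses to reveal which game is being played through her actions; but $\pone$, being rational, will only reveal this information if doing so does not cost her. So I would engineer the games so that the action profiles that would let $\ptwo$ identify $G$ and earn $\stackval_2(G)$ are precisely the profiles that are suboptimal for $\pone$, while the profiles $\pone$ is willing to play in a PNE leave $\ptwo$ unable to separate the games. Concretely, I expect $\cD$ to be supported on (at least) two games $G_a, G_b$ that differ only in $\ptwo$'s utility function $U_2$ (so that $\pone$, being fully informed, always knows which is which, but $\ptwo$'s signal of precision $p_2 \le \pstar < 1$ fails to reveal it with constant probability), and such that $U_1$ is identical across the two games. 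The key tension: in $G_a$ the Stackelberg-optimal (for $\ptwo$-as-leader) play requires $\ptwo$ to commit to some $\vy_a$ to which $\pone$'s best response yields $\ptwo$ high utility, and similarly $\vy_b$ in $G_b$; but $\vy_a \ne \vy_b$, and committing to the wrong one is disastrous, so $\ptwo$ must know $G$ — and the only way she learns it is from $\pone$'s behavior, which $\pone$ will withhold.

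The main steps, in order, would be: (i) Fix the game family and $\cD$, verify no weakly dominated actions, and compute $\stackval_1(G), \stackval_2(G)$ for each $G$ in the support, so that $\stackval_2(\cD)$ is pinned down. (ii) Characterize, for an arbitrary PNE $(\pi_1,\pi_2)$, what $\pone$'s best-response constraint forces: since $U_1$ is the same in both games and $\pone$ knows $G$, consider the ``pooling'' behavior where $\pone$ plays the same trajectory regardless of $G$; argue that any trajectory in which $\pone$'s play depends on $G$ in a way that helps $\ptwo$ must either (a) not improve $\pone$'s own utility, so $\pone$ is indifferent, in which case we still need to rule out $\ptwo$ benefiting — here I'd invoke a tie-breaking / worst-case-over-PNE style argument, or design $U_1$ so that revealing strictly hurts $\pone$ — or (b) strictly hurt $\pone$, contradicting the PNE condition. (iii) Given that $\pone$'s observable play is (essentially) independent of $G$, show that $\ptwo$'s feedback $H_2^r = (\vx^r, \vy^r, U_2(\vx^r,\vy^r;G))$ still cannot separate $G_a$ from $G_b$ on the realized equilibrium path — this requires choosing $U_2$ so that along the profiles actually played, $U_2(\cdot;G_a) = U_2(\cdot;G_b)$, i.e. the games differ only ``off the equilibrium path.'' (iv) Conclude that $\ptwo$'s strategy $\vy^t$ is (in distribution) the same under $G_a$ and $G_b$ conditioned on her uninformative signal, so she cannot play the game-specific $\vy_a$ vs.\ $\vy_b$ needed for $\stackval_2$; quantify the resulting utility loss as a constant (depending on the gap between the games and on $1-\pstar$, the probability her signal is uninformative) and deduce the $\Omega_T(1)$ additive deficit, hence also the per-realized-game statement for whichever of $G_a,G_b$ gives the lower payoff under $\ptwo$'s forced play.

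The delicate point — and the main obstacle — is step (ii)/(iii): ruling out \emph{every} PNE, not just natural ones. A PNE only requires that neither player can unilaterally improve, so I must show there is \emph{no} equilibrium in which $\pone$ voluntarily ``teaches'' $\ptwo$ the game. The danger is a PNE where $\pone$ reveals $G$ and is compensated by $\ptwo$ playing some profile favorable to $\pone$ in return (a cooperative equilibrium), which could let $\ptwo$ also reach her benchmark. To close this, I expect to need $U_1$ engineered so that the \emph{only} way $\ptwo$ can reach $\stackval_2(G)$ is via profiles that strictly lower $\pone$'s utility below what $\pone$ secures by pooling and ignoring $\ptwo$ — i.e., there is no profile simultaneously good for both — combined with an argument that $\pone$'s pooling strategy (play her own Stackelberg-type strategy regardless of $G$, say) is a safe deviation guaranteeing her a fixed utility that strictly exceeds her payoff in any hypothetical $\ptwo$-benchmark-achieving equilibrium. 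The asymptotic/$\limsup$ formulation of the PNE (Definition~\ref{def:PNE}) and the need to handle mixed strategies and the averaging over $s_2$ add bookkeeping but should not change the structure; I'd also want the lower bound to degrade gracefully as $p_2 \to \pstar$, which is why the constant will carry a factor like $(1-\pstar)$ times a game-dependent utility gap.
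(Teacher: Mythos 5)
Your construction is inverted relative to what the argument actually needs, and this is a genuine gap rather than bookkeeping. You keep $U_1$ identical across the two games and let $U_2$ differ; the paper does the exact opposite (\Cref{fig:full-vs-partial}): $U_2$ is \emph{identical everywhere} in $G_1$ and $G_2$, and only $U_1$ differs --- $\ptwo$'s distinct Stackelberg commitments ($C$ in $G_1$, $D$ in $G_2$) arise purely because $\pone$'s best-response map differs, not because $U_2$ does. That choice is load-bearing in two ways your plan cannot replicate. First, with $U_2$ identical everywhere, $\ptwo$'s feedback carries no information beyond $\pone$'s observed play, so $\pone$'s deviation ``run $\pi_1(G_1,\cdot)$ even when $G=G_2$'' produces feedback for $\ptwo$ distributed (up to the $s_2$-reweighting) exactly as under $G_1$; this is what makes the deception deviation analyzable. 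Your fix of letting $U_2$ differ only ``off the equilibrium path'' is not something you can enforce across \emph{all} PNE and all unilateral deviations of $\ptwo$: the path is endogenous, $\ptwo$ may deviate to an exploring algorithm whose realized $U_2$-values identify $G$ without any help from $\pone$, and $\pone$'s deception no longer reproduces the $G_a$-trajectory because $\ptwo$'s payoff stream is still generated by $G_b$. Second, with $U_1$ identical, $\pone$'s payoff does not depend on $G$ at all, which removes the engine for strict concealment incentives: since the follower action at $\ptwo$'s Stackelberg profile is a best response, any fixed ``safe'' row $x$ has $\min_y U_1(x,y)\le U_1(x_a,y_a)$ and $\le U_1(x_b,y_b)$, so the pooling guarantee you invoke can never strictly exceed the average follower payoff at $\ptwo$'s Stackelberg profiles; and when $U_1(x_a,y_a)=U_1(x_b,y_b)$ a simple revealing equilibrium ($\pone$ signals $G$ in round one, $\ptwo$ commits to $y_G$ thereafter) survives and hands $\ptwo$ her benchmark. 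The paper's games dodge both issues simultaneously: the $G_1$-Stackelberg cell $(A,C)$ pays $\pone$ the flat, opponent-independent amount $16/\gamma$ in $G_1$ (this powers \Cref{claim1}), while the $G_2$-Stackelberg cell $(B,D)$ pays her only $0.1$, so concealing is strictly profitable --- and $\ptwo$, whose utilities are the same in both games, can be fooled.

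There is also a structural flaw in your steps (ii)--(iv): you aim to show that in every PNE $\pone$'s observable play is essentially independent of $G$, so $\ptwo$ cannot separate the games. That intermediate claim is both stronger than needed and false for the paper's own construction --- \Cref{obs:atEquilibriumSuccessfulLearning} exhibits a PNE of exactly this $\cD$ in which $\ptwo$ \emph{does} successfully learn $G$ from the interaction; the obstruction is not non-revelation but that no benchmark-achieving pair is stable. The paper argues by contradiction from the assumption that $\ptwo$ attains $\stackval_2(\cD)=3/2$: it derives three CSP facts ($\csp_1(A,D)$ and $\csp_1(B,D)$ small from the two players' equilibrium conditions, $\csp_2(B,D)\ge 1/2$ from the benchmark assumption) and shows $\pone$'s deviation to always acting as under $G_1$ strictly gains in $G_2$, contradicting the PNE condition. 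Finally, your treatment of $p_2>0$ as ``bookkeeping with a constant degrading like $1-\pstar$'' understates the difficulty: after the deviation the induced mixture of $\csp_{11},\csp_{12}$ has the weights $(1\pm p_2)/2$ swapped, so small on-path probabilities get amplified by $(1+p_2)/(1-p_2)$, and the paper must scale the payoff magnitudes themselves by $1/\gamma=(1+\pstar)/(1-\pstar)$ --- the construction, not merely the constant in the bound, has to depend on $\pstar$.
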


\Cref{fullAsymClaim1,AgentSignalLowerBound} show that there is a separation in achievable benchmark whenever the less-informed player is at any informational disadvantage, however small, compared to the fully-informed player. $\ptwo$'s signal could be arbitrarily close to being fully informative (i.e., $p_2$ is arbitrarily close to 1), but there is still a barrier between what $\ptwo$ can achieve compared to $\pone$, when $\pone$ has full knowledge. 

\subsection{Proof sketches of main theorems}    
        Now we present proof sketches for the two theorems, defering the full proofs to the appendices.   
        \vspace{-2mm}
        \begin{proof}[Proof sketch of \Cref{fullAsymClaim1}]
            Our proof puts together results from previous work~\citep{deng2019strategizing,haghtalab2024calibrated}. We present the proofs of these results for completion. In this proof sketch, we will prove the theorem when every game $G$ in the support of $\cD$ is such that $\ptwo$ has a unique best-response $y(\vx^{\star}; G)$ to $\pone$'s optimal Stackelberg strategy $\vxstar(G)$. The full proof is in \Cref{proof:fullAsymClaim1}. 
            
            Let $\pi_1$ be the algorithm that plays $\pone$'s optimal Stackelberg strategy of the realized game $G$ (i.e., $\vxstar(G)$) at every round. Since $\pone$ has access to a signal that fully reveals the realized game $G$, $\pone$ can compute $\vxstar(G)$ and employ this strategy. Let $\pi_2$ be a no-swap-regret algorithm in the bandit-feedback setting (the algorithm is only based on the utilities received in each round). Such algorithms exist~\citep{blum2007external, dagan2023external, peng2023fast} and are deployable by $\ptwo$ without any knowledge of the game played or $\pone$'s strategies. Note that $\pi_2$ does not use $\ptwo$'s signal $s_2$. Therefore our analysis holds for all levels of precision of $s_2$. 

            First, let us analyze the expected utility of $\ptwo$ due to $(\pi_1, \pi_2)$. The generated trajectories when the game $G$ is realized are of the form $(\vxstar(G), \vy^t)_{t =1}^\infty$. Since $\pi_2$ is a no-swap-regret algorithm, the regret of this trajectory up to round $T$ is sub-linear in $T$ ($o(T)$).

            Since we assumed that $\BR_2(\vxstar;G)$ is unique, any round where $\ptwo$ is not employing this unique best-response $(y(\vxstar; G))$ causes $\ptwo$ to incur regret. The no-swap-regret property for $\ptwo$ essentially means that $\ptwo$'s strategies in the trajectory $(\vy^t)_{t=1}^\infty$ become close to $y(\vxstar; G)$. And as a result, $\pone$'s utility per round gets close to $U_1(\vxstar(G), y(\vxstar; G);G)$ which is $\stackval_1(G)$. 
            
            More formally, $\pone$'s cumulative utility over $T$ rounds satisfies $\sum_{t \in [T]} U_1(\vxstar(G), \vy^t) \geq \stackval_1(G) \cdot T- c_1 \sum_{t\in [T]}  \|\vy^t - \vy(\vxstar; G)\|_1$, where $\vy(\vxstar; G)$ is the one-hot vector encoding of $y(\vxstar; G)$ and $c_1 = {\max}_{y \in \calA_2\setminus\{y(\vxstar; G)\}} U_1(\vxstar(G), y;G)$. We bound term $\sum_{t \in [T]}  \|\vy^t - \y(\vxstar; G)\|_1$ using the no-swap-regret property. $\ptwo$'s swap regret is  at least $\sum_{t \in [T]} c_2 \|\vy^t - \vy(\vxstar; G)\|_1$, where $c_2 = U_2(\vxstar(G), y(\vxstar; G)) - \max_{y \in \calA_2\setminus\{\vy(\vxstar; G)\}}U_2(\vxstar(G), y)$ is the minimum difference of $\ptwo$'s utility between playing the best response action $\vy(\vxstar; G)$ and any other action in $\calA_2$. Sub-linear swap regret therefore implies that $\E\left[\sum_{t = 1}^T  \|\vy^t - \vy(\xstar; G)\|_1\right] \in o(T)$ and thus $\E\left[\sum_{t=1}^T U_1(\vxstar(G), \vy^t)\right] \geq \stackval_1(G) \cdot T- o(T)$, i.e., $\pone$'s expected average utility in $T$ rounds is at least $\stackval_1(G)-o_T(1)$.

            We have shown that the pair $(\pi_1, \pi_2)$ achieves $\pone$'s benchmark utility. We now show that it is a PNE of the meta-game. Fixing $\pi_1$, the maximum utility $\ptwo$ can get is the utility achieved by playing {$y(\vxstar;G)$}, $\forall t$. $\ptwo$ does not necessarily know $G$ to play {$y(\vxstar;G)$} for all $t \in [T]$, but we have shown that due to $\pi_2$ being a no-swap-regret algorithm, $\ptwo$ ends up playing strategies close to {$y(\vxstar;G)$} asymptotically. The difference between $\ptwo$'s cumulative utility between {playing $\pi_2$ against $\pi_1$, versus playing per-round best response against $\pi_1$}
            is at most $O ( \sum_{t \in [T]} \E\|\vy^t - \vy(\vxstar; G) \|_1  )$ which is $o(T)$ by the no-swap-regret property. So $\ptwo$ has vanishing incentive to deviate from $\pi_2$ in the meta-game.

            Next fixing $\pi_2$ to be a no-swap-regret algorithm, previous work~\citep{deng2019strategizing,haghtalab2024calibrated} caps $\pone$'s achievable utility through any algorithm $\pi_1'$ (\citet[Theorem 6]{deng2019strategizing}). These results show that for every $\pi_1'$, $\pone$'s expected average utility induced by $(\pi_1',\pi_2)$ in $T$ rounds is at most $\stackval_1(G) + o_T(1)$. Since we have shown that $(\pi_1, \pi_2)$ yields at least $\stackval_1(G) - o_T(1)$ for $\pone$, there is vanishing incentive for $\pone$ to deviate.

            In \Cref{proof:fullAsymClaim1}, we extend this proof to the scenario with potential ties in $\ptwo$'s best response, but under the assumption that $\ptwo$ has no weakly dominated action. 
        \end{proof}

\begin{proof}[Proof sketch of \Cref{AgentSignalLowerBound}]
        To prove this theorem, we construct a family of two games $G_1$ and $G_2$ (shown in \Cref{fig:full-vs-partial}) and let the prior distribution $\cD$ to be uniform over $G_1$ and $G_2$. 
        Note that the maximum value of game parameters depends inversely on $\gamma\triangleq\frac{1-\pstar}{1+\pstar}$, where $\pstar$ is the maximum precision of the signal received by $\ptwo$. In this construction, the utility functions in both games are identical for $\ptwo$ but different for $\pone$. This implies that $\ptwo$ cannot gain any additional knowledge about which game is realized from looking at her own utility function.
                \begin{figure}[ht]
            \centering
            \begin{minipage}{0.4\textwidth}
                \centering
                \begin{tabular}{l|cc|cc|}
                \multicolumn{1}{l}{}&	\multicolumn{2}{c}{$C$}	&  \multicolumn{2}{c}{$D$} \\\cline{2-5}
        	    $A$	&	\cellcolor{gray!50}$16/\gamma$,& \cellcolor{gray!50}$1$	&	$16/\gamma$,& $- 32/\gamma$ \\\cline{2-5}
                $B$	&	$2$,& $0$	&	$0$,& $2$ \\\cline{2-5}
                \end{tabular}
                \caption*{Game Matrix $G_1$}
            \end{minipage}
            \hspace{0.1\textwidth} %
            \begin{minipage}{0.4\textwidth}
                \centering
                \begin{tabular}{l|cc|cc|}
                \multicolumn{1}{l}{}&	\multicolumn{2}{c}{$C$}	&  \multicolumn{2}{c}{$D$} \\\cline{2-5}
        	    $A$	&	$1$,& $1$	&	$0$,& $- 32/\gamma$ \\\cline{2-5}
                $B$	&	$0.9$,& $0$	&	\cellcolor{gray!50}$0.1$,& \cellcolor{gray!50}$2$ \\\cline{2-5}
                \end{tabular}
                \caption*{Game Matrix $G_2$}
            \end{minipage}
            \caption{\footnotesize game matrices $G_1$ and $G_2$. $\pone$ is the row player and $\ptwo$ is the column player. The values in each cell are ($\pone$'s utility, $\ptwo$'s utility).
            Shaded cells represent the action profiles supported in the Stackelberg equilibria led by the column player. The parameter $\gamma$ is defined as $\frac{1-\pstar}{1+\pstar}\in(0,1]$, where $\pstar$ is the precision threshold of $p_2$. }
            \label[figure]{fig:full-vs-partial}
        \end{figure}
        
        We first illustrate the high-level idea  
        by considering a hypothetical situation where the trajectory always converges to the Stackelberg equilibrium led by $\ptwo$ for all $G$. In other words, the trajectory converges to $(x(\vystar; G_1),\vystar(G_1))$ when $G_1$ is realized and $(x(\vystar; G_2),\vystar(G_2))$ when $G_2$ is realized.
        It is not hard to check that the Stackelberg equilibria turns out to be supported on different pure-strategy pairs: $(A,C)$ in $G_1$ and $(B,D)$ in $G_2$ (shaded cells in \Cref{fig:full-vs-partial}).
        Because the Stackelberg strategies differ for $G_1$ and $G_2$, 
        to converge to the correct equilibrium,
        $\ptwo$ must have gained full information about which game $G$ is being played through repeated interactions with $\pone$. However, from $\pone$'s perspective, the strategy pair $(A,C)$---the Stackelberg equilibrium led by $\ptwo$ in $G_1$---is more favorable than the other equilibrium $(B,D)$ in both $G_1$ and $G_2$. Therefore, instead of disclosing information about which $G$ is realized, it would be more beneficial for $\pone$ to conceal this information and always behave as if $G$ were $G_1$. Therefore, any pair of algorithms that give rise to this hypothetical situation cannot be an equilibrium in the space of algorithms.

        Our actual proof applies similar ideas to 
        establish a stronger claim: not only is it impossible for $\ptwo$ to have the trajectory always converge to their Stackelberg equilibrium, but they cannot recover an average utility of $\stackval_2(\cD)$ through \emph{any} repeated interactions with $\pone$ that are specified by PNE algorithm pairs.
        To argue this, we will use the notion of \emph{correlated strategy profiles (CSP)}~\citep{arunachaleswaran2024pareto} as a succinct way of analyzing the expected utility of each player. For a distribution $\T^T$ over trajectories of length $T$, the CSP induced by $\T^T$, denoted as $\csp_{\T^T}$, is a correlated distribution in $\Delta(\calA_1\times\calA_2)$ which is taken as the empirical average of the mixed-strategy profiles in each time step, i.e., $\csp_{\T^T}\triangleq\Ex_{(\x_t,\y_t)_{t\in[T]}\sim \T^T}[(1/T)\sum_{t \in [T]}\vx_t\otimes\vy_t]$. Since CSPs serve as a sufficient statistics of both players' expected utility (which is a direct consequence of the linearity of utilities), working with them significantly reduces the dimension of the problem.

        \xhdr{Special case: full information asymmetry.}
        We start with the full information asymmetry setting, i.e., $p_1=1$ and $p_2=0$. For the sake of contradiction, assume that a pair of equilibrium algorithms $(\pi_1,\pi_2)$ can let $\ptwo$ achieve the benchmark $\stackval_2(\cD)=3/2$.
        With the CSPs introduced above, we can rewrite $\ptwo$'s average expected utility as $\frac{1}{2}\E_{(x,y)\sim\csp_1}U_2(x,y;G_1)+\frac{1}{2}\E_{(x,y)\sim\csp_2}U_2(x,y;G_2)$,
        where we have used $\csp_1$ and $\csp_2$ to denote 
        the CSPs induced by the distribution over trajectories generated by $\T^T(\pi_1,\pi_2;G_1)$ and $\T^T(\pi_1,\pi_2;G_2)$,
        respectively. 
        
        Similar to the hypothetical situation sketched above, we want to argue that there is incentive for $\pone$ to deviate to an algorithm $\pi_1'$ that always behaves according to $\pi_1(G_1)$ even when the actual game is $G_2$. In other words, we aim to show that $\pone$'s expected utility in $G_2$ strictly increases after replacing the induced CSP from $\csp_2$ to $\csp_1$, i.e., $\E_{\tau\sim\csp_1}U_1(\tau;G_2)>\E_{\tau\sim\csp_2}U_1(\tau;G_2).$
        Note that for $\pone$'s utility in $G_2$, cells involving action $C$ all have utility close to $1$, whereas those involving action $D$ all have utility close to $0$. Therefore, it suffices to show that cells involving action $D$ take up a significant probability mass in $\csp_2$ but very little in $\csp_1$. We break these into the following three claims and use the equilibrium condition to establish them in \Cref{app:proof-claim2}.
        \squishlist
        \item \textbf{Claim 1.} $\csp_1(B,D)$ is very small, otherwise $\pone$ would deviate to always playing action $A$.
        \item \textbf{Claim 2.} $\csp_1(A,D)$ is very small, otherwise $\ptwo$ would deviate to always playing action $C$.
        \item \textbf{Claim 3.} $\csp_2(B,D)$ is very large, otherwise $\ptwo$ cannot achieve benchmark $\stackval_2(\cD)$.
        \squishend

        \paragraph{Towards partial information asymmetry.}
        In the remainder of this sketch, we discuss the extension of the above approach to the partial asymmetry setting where $p_1=1$ and $0\le p_2\le\pstar<1$. 
        The fact that $\ptwo$'s signal is partially informative introduces extra challenge to our analysis, since $\ptwo$'s belief about the true game depends not only on $\pone$'s behavior during the interaction, but also on the information carried by the external signal $s_2$. As a result, if $\pone$ deviates to acting according to $G_1$ when the actual game is $G_2$, it does not trigger the expected CSP when the realized game is $G_1$, but instead causes a ``distorted'' posterior since the distribution of $s_2\sim\varphi_{p_2}(\cdot|G_2)$ does not change. 
        
        To illustrate this, 
        consider the four different CSPs introduced by all combinations of the realized signals received by both players. 
        For $(i,j)\in\{1,2\}^2$, let $\csp_{ij}$ to denote the CSP induced by $\T^T(\pi_1,\pi_2;s_1,s_2,G=s_1)$ when $s_1=G_i$ and $s_2=G_j$ (we have set $G=s_1$ because $s_1$ perfectly reveals $G$). 
        When the realized game is $G_2$, $\pone$'s expected utility before deviation is given by
        \begin{align*}
            \Bar{U}_1(\pi_1,\pi_2;G_2)=
            \frac{1-p_2}{2}\E_{\tau\sim\csp_{21}}U_2(\tau;G_2)
            +\frac{1+p_2}{2}\E_{\tau\sim\csp_{22}}U_2(\tau;G_2),
        \end{align*}
        because the probability of $\ptwo$ seeing signals $s_2=G_1$ and $s_2=G_2$ are $\frac{1-p_2}{2}$ and $\frac{1+p_2}{2}$, respectively.
        So, as for the expected utility after deviation, the coefficients $\frac{1-p_2}{2}$ and $\frac{1+p_2}{2}$ remain the same, but the first distribution $\csp_{21}$ becomes $\csp_{22}$ and the second distribution changes from $\csp_{22}$ to $\csp_{12}$:
        \begin{align*}
        \Bar{U}_1(\pi_1',\pi_2;G_2)=
            \frac{1-p_2}{2}\E_{\tau\sim\csp_{11}}U_2(\tau;G_2)
            +\frac{1+p_2}{2}\E_{\tau\sim\csp_{12}}U_2(\tau;G_2),
        \end{align*}
        However, if the true game were $G_1$, then $\csp_{11}$ and $\csp_{12}$ would be realized with swapped probability $\frac{1+p_2}{2}$ and $\frac{1-p_2}{2}$, not the ones appeared in $\Bar{U}(\pi_1',\pi_2;G_2)$! Hence, even if we can guarantee that action pairs $(A,D)$ and $(B,D)$ occur very infrequently when the true game is $G=G_1$, they may only occur under $\csp_{12}$, whose frequency gets amplified by $\frac{1+p_2}{1-p_2}$ times when factoring into the utility after deviation $\Bar{U}(\pi_1',\pi_2;G_2)$. Therefore, establishing the benefit of deviation requires a much smaller probability of $(A,D)$ and $(B,D)$ under the CSPs induced by $G=G_1$. This is why we need the game parameters to inversely depend on $\gamma=\frac{1-\pstar}{1+\pstar}$, where $\pstar$ is an upper bound on $p_2$.
        \end{proof}

\subsection{Difference in learning through repeated interactions and learning independently}\label{sec:3-1}
In this section, we provide an informal discussion on the reason behind $\ptwo$'s failure to recover their Stackelberg value benchmark through repeated interactions, with a more formal treatment deferred to \Cref{differenceInLearningExternalVs}. We argue that the failure is not due to the PNE of meta-game always preventing $\ptwo$ from ``learning'' the game, 
but rather because $\ptwo$ cannot apply her learned knowledge to recover her Stackelberg value in any equilibrium.

At each round, $\ptwo$ can form a posterior belief about the realized game $G$ based on her observed feedback from the historical interactions and the initial signal $s_2$ received. We say that $\ptwo$ \emph{successfully learns} $G$ if her posterior belief converges to the point distribution on $G$ (formally in \Cref{def:successfulLearningRepeatedInteractions}). Interestingly, using the same pair of PNE algorithms designed to show that $\pone$ can achieve their Stackelberg value, we can show that $\ptwo$ \emph{is indeed} able to successfully learn $G$ through strategic interactions. This is because $\ptwo$'s strategy converges to the best response $y(\vxstar;G)$, which perfectly reveals $G$ for some game families $\cG$. Thus, successful learning of $G$ through repeated interactions \emph{can} happen in a PNE of a meta-game where $\stackval_2(\cD)$ cannot be achieved (\Cref{obs:atEquilibriumSuccessfulLearning}).

The problem preventing $\ptwo$ from achieving $\stackval_2(\cD)$ is not an insufficient rate or accuracy of learning, but rather the fact that learning and acting on this learned knowledge are intertwined. In fact, if $\ptwo$'s learning was independent of the repeated interaction,
i.e., when $\ptwo$ has access to external signals that become more accurate over time, she \emph{can} achieve $\stackval_2(\cD)$ (\Cref{prop:independentLearningStackBenchmark}).

\section{Interactions between two partially-informed players}
\label{sec:partial-asym}

In this section, we consider the setting where neither player is fully informed. That is, the precision of both player's signals ($p_1, p_2$) are less than one. Even though there may be information asymmetry in the form of different precision levels of player signals, we show that there is no longer a clear separation between players through the average Stackelberg value benchmark. 

At a high level, what distinguishes this setting from the previous setting (hence resulting in the lack of separation), is that the identity of the more-informed player can shift throughout the course of the repeated interaction. Due to the structure of $\cD$, more information about the realized game may be released to one player compared to the other. In contrast, when the more informed player starts with perfectly knowing the realized game, there is no possibility of her becoming less informed since there is no information beyond what she already knows.

\begin{example}\label{ex:gameRevealing}
Consider $\cD$ to be the uniform distribution over the two game matrices defined in the figure below.
\begin{figure}[ht]\label[figure]{fig:gameMatricesLeakInfo}
            \centering
            \begin{minipage}{0.4\textwidth}
                \centering
                \begin{tabular}{l|cc|cc|}
                \multicolumn{1}{l}{}&	\multicolumn{2}{c}{$C$}	&  \multicolumn{2}{c}{$D$} \\\cline{2-5}
        	    $A$	&	$1$,& $1$	&	$-1$,& $5$ \\\cline{2-5}
                $B$	&	$0$,& $2$	&	$2$,& $5$ \\\cline{2-5}
                \end{tabular}
                \caption*{Game Matrix $G_1$}
            \end{minipage}
            \hspace{0.1\textwidth} %
            \begin{minipage}{0.4\textwidth}
                \centering
                \begin{tabular}{l|cc|cc|}
                \multicolumn{1}{l}{}&	\multicolumn{2}{c}{$C$}	&  \multicolumn{2}{c}{$D$} \\\cline{2-5}
        	    $A$	&	$1$,& $3$	&	$-1$,& $0$ \\\cline{2-5}
                $B$	&	$0$,& $7$	&	$2$,& $8$ \\\cline{2-5}
                \end{tabular}
                \caption*{Game Matrix $G_2$}
            \end{minipage}
            \caption{\footnotesize Example game matrices $\calG=\{G_1,G_2\}$ revealing more information to $\ptwo$ compared to $\pone$. Here, $\pone$ is the row player and $\ptwo$ is the column player.} 
        \end{figure}

Note that if $\ptwo$ chooses the pure strategy $C$, then for any strategy of $\pone$, $\ptwo$'s utility lies in the range $[1,2]$ if $G_1$ is realized and in the range $[3,7]$ if $G_2$ is realized. Since both ranges are non-intersecting, $\ptwo$ can deduce $G$ exactly after a single round by choosing the pure strategy $B$ in the first round. 

However, since the utilities of $\pone$ for all action profiles are the same in both $G_1, G_2$, $\pone$ gains no additional information about the realized game. So even if $\pone$ started off with a more informative signal, after a single round, $\ptwo$ becomes more informed and in fact perfectly informed. 
\end{example}

To show that the average Stackelberg value benchmark does not separate the more- from the less-informed player, we will show that neither player can achieve the average Stackelberg value in all instances. Put another way, for every possible pair of player signals' precision, there is an instance such that this player cannot achieve the benchmark value at equilibrium. 

\begin{proposition}
    For every player signal precision values $p_1, p_2 \in [0,1)$, for each $i \in \{1,2\}$, there exists $\cD$ such that for every PNE $(\pi_1, \pi_2)$ of the meta-game, $\Bar{U}_i^T(\pi_1,\pi_2; \cD) \le \stackval_i(\cD) -\Omega_T(1)$.
\end{proposition}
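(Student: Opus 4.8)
The plan is to prove the proposition by leveraging the already-established construction behind Theorem~\ref{AgentSignalLowerBound}, but symmetrizing it. Recall that Theorem~\ref{AgentSignalLowerBound} shows that when $\pone$ is fully informed ($p_1=1$) and $\ptwo$ has precision $p_2\le\pstar$, there is a distribution $\cD$ on which no PNE lets $\ptwo$ reach $\stackval_2(\cD)$. The heart of that argument is not that $p_1=1$ exactly, but rather that (i) $\ptwo$'s signal is \emph{noisy enough} (its posterior can be distorted by a factor controlled by $\gamma=\tfrac{1-\pstar}{1+\pstar}$), and (ii) $\pone$, upon seeing that $G=G_2$, has a profitable deviation to ``pretend it is $G_1$'' because the $\ptwo$-led Stackelberg profile $(A,C)$ of $G_1$ is strictly better for $\pone$ than $(B,D)$ in \emph{both} games. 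So for the index $i=2$: first I would check that the proof of Theorem~\ref{AgentSignalLowerBound} actually goes through verbatim when $p_1<1$ is arbitrary rather than $p_1=1$. The only place $p_1=1$ was used was to set $G=s_1$, i.e., to let $\pone$ infer $G$ from her own signal; when $p_1<1$ this fails half the time, so one must redo the CSP bookkeeping with the four conditional CSPs $\csp_{ij}$ now \emph{also} mixed over whether $s_1$ revealed $G$ or not. The key observation is that $\pone$'s deviating algorithm $\pi_1'$ can be taken to act according to $\pi_1$'s behavior-on-signal-$G_1$ \emph{only in the branch where $\pi_1$'s posterior is sufficiently confident that $G=G_2$} — or more simply, $\pi_1'$ behaves as $\pi_1$ does after receiving $s_1=G_1$, regardless of the true game. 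The utility computation after deviation then again just swaps which CSP is weighted by which of $\tfrac{1-p_1}{2}$ and $\tfrac{1+p_1}{2}$, and the same three claims (Claims~1--3 in the sketch) — now with $p_1$-dependent coefficients — show $\pone$ strictly gains by deviating whenever $\ptwo$ is at the benchmark. One needs the game parameters to scale inversely with both $\gamma_1=\tfrac{1-p_1}{1+p_1}$ and $\gamma_2=\tfrac{1-p_2}{1+p_2}$, but since $p_1,p_2\in[0,1)$ are fixed in the proposition, both are strictly positive constants, so such a finite game exists.

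For the index $i=1$, the plan is to \emph{swap the roles of the two players} in the same construction. Concretely, take the transpose of the game family: let $\ptwo$ be the ``row'' player whose Stackelberg value is at stake, let $\pone$ be the ``column'' player whose utility is identical across the two games (so $\pone$ learns nothing from her own payoffs), and rescale the parameters using $\gamma_1=\tfrac{1-p_1}{1+p_1}$ and $\gamma_2=\tfrac{1-p_2}{1+p_2}$. Because the construction in Theorem~\ref{AgentSignalLowerBound} never used any asymmetry between the players \emph{except} in deciding whose utility is game-independent and whose Stackelberg value we target, the mirrored argument gives a distribution $\cD$ on which no PNE lets $\pone$ attain $\stackval_1(\cD)$: the player who here plays the role of ``concealer'' is $\ptwo$, who upon discovering the realized game prefers to pretend it is the game whose $\pone$-led Stackelberg profile is uniformly better for $\ptwo$. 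The only subtlety is that $\ptwo$ is the \emph{less}-informed player, so if anything her incentive to conceal is weaker — but the noisiness of $\pone$'s signal ($p_1<1$) is exactly what is needed to make $\pone$ unable to pin down $G$ without $\ptwo$'s cooperation, and the game parameters scaled by $1/\gamma_2$ (and $1/\gamma_1$) absorb the distortion from $\ptwo$'s side. I would present both cases by stating a single ``mirrored lemma'' (the generalization of Theorem~\ref{AgentSignalLowerBound} allowing $p_1<1$ and either player in the target role) and then instantiating it twice.

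The main obstacle I anticipate is the bookkeeping in the $p_1<1$ generalization of the CSP argument: with neither signal perfectly informative, there are four conditional CSPs per true game, and after a deviation the weights on them get permuted \emph{and} the posterior that $\ptwo$ (resp.\ $\pone$) forms is a mixture that no longer corresponds to any on-path posterior, so Claims~1--3 must be re-proved for the correct mixtures rather than the clean ones in the sketch. In particular one must verify that the ``bad'' action profiles (the analogues of $(A,D)$ and $(B,D)$) are rare under \emph{all} the on-path CSPs for the true game being the ``concealed'' one, not just one of them, so that their amplified post-deviation weight (blown up by $\tfrac{1+p}{1-p}=1/\gamma$) is still dominated; this is precisely what forces the inverse-$\gamma$ scaling of the payoffs, and it must be checked that the scaled game still has no weakly dominated actions and still has the claimed Stackelberg structure. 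Once the mirrored lemma is in place, the two instantiations for $i=1$ and $i=2$ are immediate, and the ``this implies there is a realized $G$ with a per-instance gap'' consequence follows exactly as in Theorem~\ref{AgentSignalLowerBound} by an averaging argument over the support of $\cD$.
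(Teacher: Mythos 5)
Your treatment of $i=1$ --- flipping the two players' utilities in the construction of \Cref{AgentSignalLowerBound} so that the mirrored statement applies to $\pone$ --- is exactly the paper's move. For $i=2$, however, you take a far heavier route than necessary, and the heavy part is precisely what you leave unexecuted. You propose to re-derive \Cref{AgentSignalLowerBound} for general $p_1<1$ by redoing the CSP bookkeeping; but once both signals are noisy, the conditional trajectory distributions are indexed by the full triple $(s_1,s_2,G)$, so there are eight of them rather than four, the on-path CSPs with $s_1\neq G$ are new objects about which the original Claims 1--3 say nothing, and the identification of the post-deviation trajectories with the $G=G_1$ trajectories requires $\pone$ to be able to simulate her $G_1$-feedback (fine under full-information feedback, delicate under bandit feedback). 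You acknowledge all of this (``Claims 1--3 must be re-proved for the correct mixtures''), which means the crux of the $i=2$ case remains a plan rather than a proof.

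The paper sidesteps this entirely by exploiting a game-revealing property of the very construction in \Cref{fig:full-vs-partial}, in the spirit of \Cref{ex:gameRevealing}: $\pone$'s payoff when she plays action $A$ is $16/\gamma$ in $G_1$ but lies in $[0,1]$ in $G_2$, so the two ranges are disjoint, while $\ptwo$'s utility matrix is identical in the two games. Hence $\pone$ can pin down the realized game after a single round regardless of $p_1$, and $\ptwo$ learns nothing new from that round; after round one we are back in the fully-informed-$\pone$, partially-informed-$\ptwo$ regime (since $p_2<1$), so the conclusion of \Cref{AgentSignalLowerBound} carries over, the lost round contributing only $O(1/T)$ to average utility. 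In particular, the quantity $\frac{1-p_1}{1+p_1}$ never needs to enter the game parameters at all, and no eight-CSP analysis is required. To complete your write-up you should either carry out that generalization in full, or --- much more simply --- observe the disjoint-range property of the existing construction and reduce to the theorem you already have, handling $i=1$ by the utility-flipped distribution as you and the paper both propose.
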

\begin{proof}
    The proof of this proposition reduces to the proof of \Cref{AgentSignalLowerBound}. This is due to the following game-revealing property (similar to \Cref{ex:gameRevealing}) of the construction $\cD$ used in the proof of \Cref{AgentSignalLowerBound} described by \Cref{fig:full-vs-partial}. Since the range of the set of attainable utilities in $G_1, G_2$ when $\pone$ chooses action $A$, has no intersection for $\pone$, regardless of $\ptwo$'s strategy, $\pone$ can deduce the game exactly after a single round while $\ptwo$ gains no additional information after a single round.

    After the first round, we are in the regime of a fully informed $\pone$ and a partially informed $\ptwo$ since $p_2 < 1$. \Cref{AgentSignalLowerBound} already shows that in this regime, no equilibrium provides $\ptwo$ her average Stackelberg value benchmark. Using a distribution $\cD'$ that is the same as $\cD$ but with player utilities flipped proves the proposition for $\ptwo$.
\end{proof}

\section{Discussion}\label{sec:discussion}

In this paper, we study the effects of information asymmetry (codified in terms of signals about the game played) on the achievable benchmarks of two non-myopic players interacting repeatedly over $T$ rounds. First, we showed that when $\pone$ is fully informed (i.e., knows $G$) while $\ptwo$ is not, then there is a separation between the more and the less informed player by way of each player's achievable benchmarks. Next, we showed that when neither player is fully informed (i.e., both $p_1, p_2 < 1$) then, there is no longer a clear separation between players in terms of benchmarks.

There are several avenues for future research stemming from our work. 

\xhdr{Characterizations of algorithms that can be supported in an equilibrium.} We should gain a better understanding of what algorithms from natural classes can be in equilibrium. A useful step  would be to characterize the necessary and/or sufficient conditions for an algorithm to be supported in the PNE of the meta-game. Our work and previous work provide sufficient conditions such as no-swap-regret algorithms and best-responding per round: an algorithm satisfying either condition can be supported in a PNE of every meta-game if at least one player is fully informed. Previous work also implies that no-regret is not sufficient for an algorithm to be part of a meta-game PNE (see \Cref{implicationsPriorWorks} for more details). Finally, in the case where neither player is fully informed, it would be very useful to characterize the structure of the meta-game that causes a shift wrt the information advantage.

\xhdr{Other models of how signals are generated.}
Alleviating some of our modeling assumptions, one could ask how the results would change if nature was not assumed to be truthful
with respect to the signal reporting but it may strategically modify the signals to achieve its own goals, such as maximizing social welfare. Taking this aspect into account, we can consider an information design setting where the nature designs a signaling scheme that shapes both agents' beliefs about the state and therefore their algorithms of choice.
In addition, we have assumed that nature provides signals cost-free.
This is a required and natural first step, but an interesting direction would be to understand what happens when the signals are costly and their accuracy is positively correlated with their cost. 

\xhdr{Computational aspects of meta-game equilibrium.} Moreover, it would be very interesting to see how the results about the effects of information asymmetry generalize in the case where the players are computationally bounded; note that our current setup provides information-theoretic results, but it could be computationally hard for players to communicate their algorithms to each other, or even verify that two algorithms are at equilibrium.

\section*{Acknowledgements}
This work was supported in part by the National Science Foundation under grant CCF-2145898, by the Office of Naval Research under grant N00014-24-1-2159, a C3.AI Digital Transformation Institute grant, and Alfred P. Sloan fellowship, and a Schmidt Science AI2050 fellowship.

\bibliographystyle{plainnat}
\bibliography{ref}

\newpage
\appendix

\section{Difference in learning through repeated interactions and learning independently}
\label{differenceInLearningExternalVs}

In this part, we will use \Cref{AgentSignalLowerBound} to show that there is a difference between learning through repeated interactions and learning independently based on external signals. We will argue that learning independently is more powerful as it allows achieving the average Stackelberg benchmark, whereas learning based on repeated interactions does not always allow this. 

Our approach to doing this is to introduce two models of learning: one based on histories generated by the repeated interaction and the other based on external signals. Fixing the same success criterion for both models (to be defined soon), we will show that 
\begin{enumerate}
    \item Successful learning by $\ptwo$ based on repeated interactions is possible at a PNE of the meta-game, but still does not yield $\ptwo$ her average Stackelberg value (\Cref{obs:atEquilibriumSuccessfulLearning}).  
    
    \item Successful learning by $\ptwo$ based on external signals makes $\ptwo$'s average Stackelberg value achievable at a PNE of the meta-game, for all $\cD$ (\Cref{prop:independentLearningStackBenchmark}). 
\end{enumerate}

We first define both learning models and the success criterion of learning. Later in this section, we state results demonstrating the separation between the two learning models. 

\newcommand{\belief}{\hat{\beta}}

Successfully learning a realized game $G$ entails forming beliefs over the realized game such that the beliefs asymptotically concentrate on the true realized game $G$. 
\begin{definition}[Successful learning criterion]\label{def:successfulLearningCriterion}
    Given a realized game $G$ from a game family $\cG$, a belief sequence $(\belief^t)_{t=1}^\infty$, where $\belief^t\in\Delta(\cG)$ is a belief (distribution) over $\cG$ for each $t\in\mathbb{N}$, successfully learns $G$ if $\Pr_{\Tilde{G}^T \sim \belief^T}[\Tilde{G}^T \neq G] \in o_T(1)$. That is, the beliefs asymptotically fully concentrate on the realized game $G$.
\end{definition}

Different models of learning involve different constraints on or power afforded to how beliefs of the realized game are generated. The first model of interest is learning based on repeated interactions. Here, the belief at a round $t$ is constrained to be formed based on the initial signal and the history at round $t$ generated during the repeated interaction.   

\begin{definition}[Successful learning based on repeated interactions]\label[definition]{def:successfulLearningRepeatedInteractions}
    Given a family of games $\cG$ and a player $\playeri$ ($i\in\{1,2\}$), a \emph{history-based belief function} for $\playeri$ 
    is a mapping from the player's signal value and a history of repeated interactions to a belief distribution supported on $\cG$. Formally, we denote the belief function with $h:(s_i;H_i^{1:t-1})\mapsto\Delta(\cG)$.

    Given a distribution $\cD\in\Delta(\cG)$, we say a history-based belief function $h$, as defined above, \emph{successfully learns based on interactions} through the algorithm pair $(\pi_1, \pi_2)$ if for every realized game $G \in \cG$ and every realized signal $s_1,s_2$, the induced beliefs $\left (\belief_i^t \right )_{t=1}^\infty$ where each belief $\belief_i^t = h (s_i, H_i^{1:t-1} )$ is induced by histories $H^{r}_i $ generated from the distribution $\T^T(\pi_1,\pi_2;G,s_1,s_2)$,
    successfully learns the realized game $G$ (according to Definition~\ref{def:successfulLearningCriterion}) with probability $1$.

\end{definition}

The second form of learning occurs without dependence on the other player's actions and instead based on externally provided signals.
\begin{definition}[Successful independent learning]\label{def:independentSuccessfulLearning}
    Given a distribution of games $\cD$, we say that $\playeri$ can \emph{independently learn successfully} if for every realized game $G \sim \cD$, there is a sequence of signals $(q^t)_{t =1}^\infty$ with $q^t \in \cG$, that are generated by a sequence of signaling distributions $(Q^t)_{t=1}^\infty$ with $Q^t \in \Delta(\cG)$, where the sequence $(Q^t)_{t=1}^\infty$ successfully learns the realized game $G$ (according to \Cref{def:successfulLearningCriterion}).  
    
    The signal $q^t$ is provided to $\playeri$ at round $t$. So $\playeri$'s algorithm $\pi_i$ maps the initial signal $s_i$, history at each round $t$ ($H^{1:t-1}_i$), and $q^t$ to a strategy taken at round $t$. 
\end{definition} 

Revisiting \Cref{AgentSignalLowerBound}, which states that the average Stackelberg value is not achievable by $\ptwo$ for some $\cD$, we can question whether some property of $\cD$ and the equilibria of the meta-game prevents $\ptwo$ from successfully learning or if $\ptwo$ can successfully learn but cannot use this learning to achieve her Stackelberg value. We assert that it is the latter.

\begin{observation}\label{obs:atEquilibriumSuccessfulLearning}
    There is a game distribution $\cD$, such that there exists a PNE $(\pi_1,\pi_2)$ of the meta-game that allows $\ptwo$ to successfully learn based on repeated interactions (in the sense of \Cref{def:successfulLearningRepeatedInteractions}), but no equilibrium allows $\ptwo$ to achieve her average Stackelberg value benchmark $\stackval_2(\cD)$.
\end{observation}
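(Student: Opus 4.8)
The plan is to exhibit a single distribution $\cD$ and a single PNE that simultaneously (a) lets $\ptwo$ successfully learn $G$ via repeated interactions and (b) falls under the scope of \Cref{AgentSignalLowerBound}'s impossibility, then invoke \Cref{AgentSignalLowerBound} as a black box for the "no equilibrium achieves $\stackval_2(\cD)$" half. The natural candidate for $\cD$ is precisely the two-game family $\{G_1,G_2\}$ from \Cref{fig:full-vs-partial} used in the proof of \Cref{AgentSignalLowerBound}, with $p_1=1$ and $p_2$ below the relevant threshold; by that theorem, no PNE gives $\ptwo$ her benchmark, which already establishes the second clause. So the real work is constructing a \emph{specific} PNE of the meta-game on this $\cD$ under which $\ptwo$ successfully learns $G$ in the sense of \Cref{def:successfulLearningRepeatedInteractions}.

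For the construction, I would reuse the PNE pair built in \Cref{fullAsymClaim1}: $\pi_1$ plays $\pone$'s optimal Stackelberg strategy $\vxstar(G)$ every round (feasible since $p_1=1$), and $\pi_2$ is a no-swap-regret bandit algorithm. By the argument in the proof sketch of \Cref{fullAsymClaim1}, this is a PNE of the meta-game for every $\cD$, in particular for the one above. The key step is then to show $\ptwo$ can successfully learn: along the induced trajectory, $\pi_2$'s no-swap-regret property forces $\E[\sum_t \|\vy^t - \vy(\vxstar;G)\|_1]\in o(T)$, so $\ptwo$'s empirical play concentrates on the best response $y(\vxstar;G)$. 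The crucial observation is that in the $\{G_1,G_2\}$ construction the Stackelberg equilibria led by $\pone$ are supported on \emph{distinct} pure profiles across $G_1$ and $G_2$ (analogous to the distinct column-led equilibria $(A,C)$ vs.\ $(B,D)$ highlighted in \Cref{fig:full-vs-partial}), hence $\pone$'s committed strategy $\vxstar(G)$ — together with $\ptwo$'s own realized utilities $U_2(\vxstar(G),\vy^t;G)$, which are the feedback in the bandit model — distinguishes $G_1$ from $G_2$. I would then define the history-based belief function $h$ that outputs, say, the posterior over $\{G_1,G_2\}$ given the observed utility sequence (or a simple thresholding rule on the running average utility); because the two games yield separated utility profiles against the played strategy, this belief concentrates on the true $G$ at rate $o_T(1)$, verifying \Cref{def:successfulLearningRepeatedInteractions} with probability $1$. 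One technical point to handle is that if the $\{G_1,G_2\}$ family as drawn does not literally have $\pone$-led Stackelberg strategies that are feedback-distinguishable to $\ptwo$, I would instead pick $\pi_2$'s convergent action so that the pair $(\vxstar(G),y(\vxstar;G))$ it approaches induces distinct realized $\ptwo$-utilities across the two games — or, if needed, lightly perturb/augment the game family so this separation holds while preserving both the PNE structure of $(\pi_1,\pi_2)$ and the hypotheses of \Cref{AgentSignalLowerBound} (the argument there only needs $\ptwo$'s utility matrices to coincide, which is compatible with $\ptwo$ being able to infer $G$ from the \emph{realized} utilities once $\pone$'s committed play differs).

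The main obstacle I anticipate is reconciling two pulls on the game family: \Cref{AgentSignalLowerBound} wants $\ptwo$'s utility \emph{function} to be identical across $G_1$ and $G_2$ (so the signal $s_2$ is $\ptwo$'s only source of game information a priori), while "successful learning based on interactions" requires that once $\pi_1$ commits to $\vxstar(G)$, the feedback stream $U_2(\vxstar(G),\vy^t;G)$ separates the games. These are not in conflict — identical utility \emph{matrices} can still produce different \emph{observed} utilities once $\pone$'s committed strategy $\vxstar(G)$ differs between $G_1$ and $G_2$ — but making this precise requires checking that $\vxstar(G_1)\neq\vxstar(G_2)$ in the construction and that the gap in observed utilities is bounded away from zero on the support of the limiting play. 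Once that separation is pinned down, the rest is routine: the belief function is a deterministic function of the history, its concentration follows from the $o(T)$ swap-regret bound plus the utility gap, the PNE claim is inherited verbatim from \Cref{fullAsymClaim1}, and the impossibility of the benchmark is inherited verbatim from \Cref{AgentSignalLowerBound}.
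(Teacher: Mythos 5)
Your overall route is the same as the paper's: take $\cD$ to be the two-game family of \Cref{fig:full-vs-partial}, inherit the ``no equilibrium achieves $\stackval_2(\cD)$'' half from \Cref{AgentSignalLowerBound}, and reuse the PNE pair of \Cref{fullAsymClaim1} ($\pi_1$ commits to $\vxstar(G)$, $\pi_2$ runs a no-swap-regret algorithm), arguing that the induced play reveals $G$. The only real difference is the belief function: the paper has $\ptwo$ threshold her own time-averaged strategy against the two candidate best responses, whereas you have her read off the realized utility stream; both hinge on the same distinguishability requirement.

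That requirement is exactly the ``technical point'' you defer, and it is where your primary route fails: in the matrices of \Cref{fig:full-vs-partial}, $\pone$'s optimal commitment is pure $A$ in \emph{both} games (in $G_1$ it earns $16/\gamma$, the largest entry; in $G_2$ any commitment keeping $C$ a best response earns at most $1$, attained at pure $A$, while commitments inducing $D$ earn at most $0.1$), and $\ptwo$'s best response to $A$ is $C$ in both games. Since $\ptwo$'s payoff matrix is also identical across $G_1,G_2$, for any fixed realization of $s_2$ the law of $\ptwo$'s history under the \Cref{fullAsymClaim1} pair is the same under $G_1$ and $G_2$, so no history-based belief --- neither your utility-based one nor the paper's strategy-based one --- can satisfy \Cref{def:successfulLearningRepeatedInteractions} for this pair. (The distinct profiles $(A,C)$ vs.\ $(B,D)$ highlighted in the figure are the \emph{column-led} Stackelberg outcomes used in \Cref{AgentSignalLowerBound}, not the row-led outcomes actually played here; the paper's own write-up elides this distinction.) Consequently the substance of the observation lies in your fallback: modify $\pone$'s payoffs in $G_2$ (e.g., so that $\vxstar(G_2)$ places weight on $B$) while keeping $\ptwo$'s payoffs unchanged, verify that the play or utility stream along the equilibrium path then separates the games by a margin bounded away from zero, and re-check Claims~\ref{claim1}--\ref{claim3} and the final deviation computation of \Cref{AgentSignalLowerBound} for the perturbed family. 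As written, that step is only gestured at, so the proposal has a genuine gap at its key step.
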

\begin{proof}

\begin{figure}[ht]
            \centering
            \begin{minipage}{\textwidth}
                \centering
                \begin{tabular}{l|cc|cc|cc|}
                \multicolumn{1}{l}{}&	\multicolumn{2}{c}{$D$}	&  \multicolumn{2}{c}{$E$}& \multicolumn{2}{c}{$F$}
                \\\cline{2-7}
        	    $A$	&	$16/\gamma$,& $1-\epsilon$	&	$16/\gamma$,& $- 32/\gamma$ 
                &	$16/\gamma$,& $1+\epsilon$
             \\\cline{2-7}
                $B$	&	$2$,& $0$	&	$0$,& $2$
                & $2$,& $0$ \\\cline{2-7}
                $C$	&	$(16-\epsilon)/\gamma$,& $1+\epsilon$	&	$16/\gamma$,& $- 32/\gamma$ 
                &	$(16-\epsilon)/\gamma$,& $1 - \epsilon$
             \\\cline{2-7}
                \end{tabular}
                \caption*{Game Matrix $G_1'$}
            \end{minipage}
            \begin{minipage}{\textwidth}
                \centering
                \begin{tabular}{l|cc|cc|cc|}
                \multicolumn{1}{l}{}&	\multicolumn{2}{c}{$D$}	&  \multicolumn{2}{c}{$E$}& \multicolumn{2}{c}{$F$}
                \\\cline{2-7}
        	    $A$	&	$1$,& $1-\epsilon$	&	$0$,& $- 32/\gamma$ 
                &	$1$,& $1+\epsilon$
             \\\cline{2-7}
                $B$	&	$0.9$,& $0$	&	$0.1$,& $2$
                & $0.9$,& $0$ \\\cline{2-7}
                $C$	&	$1+\epsilon$,& $1+\epsilon$	&	$0$,& $- 32/\gamma$ 
                &	$1+\epsilon$,& $1 - \epsilon$
             \\\cline{2-7}
                \end{tabular}
                \caption*{Game Matrix $G_2'$}
            \end{minipage}
            \caption{\footnotesize game matrices $G_1'$ and $G_2'$. $\pone$ is the row player and $\ptwo$ is the column player. The values in each cell are ($\pone$'s utility, $\ptwo$'s utility).
             The parameter $\gamma$ is defined as $\frac{1-\pstar}{1+\pstar}\in(0,1]$, where $\pstar$ is the precision threshold of $p_2$. }
            \label[figure]{fig:learn-but-not-use}
        \end{figure}

Consider the construction of $\cD$ used in \Cref{AgentSignalLowerBound} (\Cref{fig:full-vs-partial}). $\cD$ is a distribution with equal probabilities over two game matrices $G_1, G_2$. \Cref{AgentSignalLowerBound} proves that the average Stackelberg benchmark is unattainable by $\ptwo$ in any equilibrium.

For this proof, we will consider another construction $\cD'$ that is equal probability over game matrices $G_1', G_2'$ defined in \Cref{fig:learn-but-not-use}. $G_1'$ is essentially $G_1$ with an additional row and column, where the additional row is essentially a duplicate of the first row and the additional column is essentially a duplicate of the first column. Rather than being an exact duplicate, the new row/column is a small perturbation (given by parameter $\epsilon$) of the original row/column it duplicates. $G_2'$ is obtained from $G_2$ similarly. 

Since the new construction is essentially duplicating rows/columns of the old construction, by the same argument as is \Cref{AgentSignalLowerBound}, the average Stackelberg benchmark is also unattainable by $\ptwo$ in any equilibrium of the meta-game of the Bayesian game $\cD'$. 

Despite this, we can show there is an equilibrium enabling $\ptwo$ to successfully learn through repeated interactions. We construct the perturbations to ensure that the $\pone$-led equilibrium response has a different response for $\ptwo$ in $G_1'$ compared to $G_2'$. Our argument is that the meta-game PNE pair results in both players playing the $\pone$-led Stackelberg equilibrium of the realized game. Therefore, based on $\ptwo$'s responses generated by the trajectories of the meta-game PNE, $\ptwo$ can determine the realized game. 

Consider the algorithm pair where $\pone$ plays her Stackelberg strategy of the realized game and $\ptwo$ plays a no-swap-regret algorithm. The proof of \Cref{fullAsymClaim1} showed that this pair forms an equilibrium and that $\ptwo$'s strategy eventually becomes the best-response of the realized game due to $\ptwo$'s no-swap-regret property. Note that the two games $G_1', G_2'$ in the support of $\cD'$ have different $\ptwo$ responses in $\pone$'s Stackelberg equilibrium. So based on the strategies $\ptwo$ plays, she will be able to deduce the realized game. Consider the belief function $h$ that at round $t$ computes the average strategy employed thus far: $\bar{\y}_t = 1/(t-1)\sum_{r=1}^{t-1} \y_r$ and forms a belief concentrated on  $G_1'$ if $\|\bar{\y}_t - \y(\vxstar;G_1')\| < \|\bar{\y}_t - \y(\vxstar;G_2')\|$ and forms a belief concentrated on $G_2'$ otherwise, where $\y(\vxstar;G_i')$ is the one-hot encoding vector of the best response $y(\vxstar;G)$ in game $G$. The induced sequence of beliefs  concentrates on the true realized game and hence satisfies the accuracy criterion for learning the realized game. Therefore, there is an equilibrium of algorithms that allows $\ptwo$ to successfully learn.

\end{proof}

The above observation highlighted a limitation of successful learning based on repeated interactions. There could be two reasons driving this limitation. One possibility is that the accuracy criterion for successful learning is not strong enough to always enable achieving the average Stackelberg value benchmark. The other possibility is that the learning process relies on the actions of the other player in the repeated interaction.

We will now argue that the limitation arises from the second reason and not the first. If the accuracy criterion for successful learning is met through independent learning based on external signals instead of histories of the game dynamics, we will show that the average Stackelberg value benchmark becomes achievable.

The following proposition shows that if $\ptwo$ can successfully learn independently for every $\cD$, then $\ptwo$ can achieve her average Stackelberg value $\stackval_2(\cD)$ for every $\cD$ at a PNE of the meta-game. This extends \Cref{fullAsymClaim1}, which stated that this is possible if $\ptwo$ was fully informed. Essentially, successful independent learning guarantees the same power to a player as being fully informed from the beginning.

\begin{proposition}\label[proposition]{prop:independentLearningStackBenchmark}
    For every $\cD$ for which $\ptwo$ can successfully learn independently (\Cref{def:independentSuccessfulLearning}), there is an algorithm pair $(\pi_1, \pi_2)$ such that $(\pi_1, \pi_2)$ is a PNE of the meta-game
    and $\forall G\in\calG,\ \Bar{U}_2(\pi_1,\pi_2;G) \geq \stackval_2(G)  - o_T(1)$. 
\end{proposition}
\begin{proof}[Proof sketch]
    The equilibrium algorithm pair we will show guarantees this utility to $\ptwo$ is the one where $\ptwo$ employs the Stackelberg response of the game denoted by the external signal $q^t$ at round $t$ and $\pone$ employs a no-swap-regret algorithm. 
\end{proof}

We defined successful independent learning as when the less informed player learns the realized game to arbitrarily high precision, based on external signals. If the external signals were not quite as powerful, the average Stackelberg benchmark can no longer be always achieved. In particular, if $\ptwo$ could only learn the realized game to a precision bounded away from 1, \Cref{AgentSignalLowerBound} shows that the average Stackelberg benchmark is not always achievable when $\pone$ is fully informed about the game.

\section{Missing Proofs}

\subsection{Proof of \Cref{fullAsymClaim1}}\label{proof:fullAsymClaim1}
\Cref{fullAsymClaim1} is implied by \Cref{prop:independentLearningStackBenchmark}. Please see the proof of \Cref{prop:independentLearningStackBenchmark} in \Cref{proof:independentLearningStackBenchmark}.

\subsection{Proof of \Cref{prop:independentLearningStackBenchmark}}\label{proof:independentLearningStackBenchmark}
We will construct an algorithm pair $(\pi_1, \pi_2)$ such that when $\pone$ is able to successfully learn through an independent sequence of signals $(q^t)_{t=1}^\infty$, $(\pi_1, \pi_2)$ is an equilibrium pair and guarantees $\pone$ an expected average utility of at least $\stackval_1(G) - o_T(1)$ in $T$ rounds, for every realized game $G$. The same proof also holds if $\ptwo$ is the player able to learn through external signals. Let $(\vxstar(G), y(\vx^*;G))$ denote $\pone$'s Stackelberg strategy and $\ptwo$'s best response in the game $G$. Let $\vy(\vx^*;G)$ be the one-hot encoding vector of $y(\vx^*;G)$.

We choose $\pi_2$ to be a no-swap-regret algorithm with swap-regret rate $O(T^a)$ for $a < 1$.
We will first construct $\pi_1$ in the setting with a known, finite time horizon $T$ so that $\pone$ achieves average regret $\stackval_1(G)- o_T(1)$ in $T$ rounds when interacting with $\ptwo$ employing a no-swap-regret algorithm. This is builds on the proof of Theorem 4 by \citet{deng2019strategizing}. We will present the argument here for completeness. We will later show how to apply the doubling trick to extend this to the setting with infinite time horizon.

Under the assumption that the realized game has no weakly dominated actions, we will show how $\pone$ can choose a strategy $\vx'$ so that $y(\vxstar;G)$ is $\ptwo$'s unique best response and $\vx'$ is close to $\vxstar(G)$.  Since $y(\vxstar;G)$ is not weakly dominated, there is no $\vy' \in \Delta(\cA_2 \setminus \{y(\vxstar;G)\})$ such that $U_2(\vx, \vy') \geq U_2(\vx, y(\vxstar;G))$ for all $x \in \cA_1$. By Farkas' lemma~\citep{farkas1902theorie}, there must exist a $\bar{\vx} \in \Delta(\cA_1)$ such that $U_2(\bar{\vx}, y(\vxstar;G)) \geq U_2(\bar{\vx}, y(\vxstar;G)) + c$ for a constant $c > 0$. This implies that if $\pone$ plays the strategy $\vx'_{\delta} = (1 - {\delta})\vxstar + {\delta} \bar{\vx}$, then for all values of ${\delta} > 0$, $\ptwo$'s unique best response to $\vx'_{\delta}$ is $y(\vxstar;G)$. To indicate dependence of this strategy on the game $G$, we will also denote it by $\vx'_{\delta}(G)$. Choosing a small ${\delta}$ enables $\pone$ to play a strategy close to the Stackelberg strategy while ensuring that $\ptwo$'s unique best response is $y(\vxstar;G)$.

  In the each round $t \in [T]$, $\pone$ employs $\vx'_{\delta^T}(q^t)$, where $q^t$ is the signal at round $t$.  We will later describe how to choose ${\delta^T}$. Let $(\vx^t, \vy^t)_{t = 1}^\infty$ be the sequence of strategies generated through the interaction of the above algorithm of $\pone$ and $\ptwo$'s no-swap-regret algorithm. Given a ${\delta^T}$, let us compute $\pone$'s expected cumulative utility. Let $Z^t$ be a random variable indicating if the external signal at round $t$ is the realized game i.e., $Z^t = \mathds{1}\{q^t = G\}$. In rounds with $Z^t = 1$, the immediate regret of $\ptwo$ at round $t$ is $c \|\vy^t - \vy(\vxstar; G)\|$. A lower bound on $\ptwo$'s regret based on regret accumulated only in rounds with $Z^t = 1$ is $\sum_{t=1}^T {\delta^T} c \cdot Z^t \|\vy^t - \vy(\vxstar; G)\|$, where $c = U_2(\vx', y(\vxstar; G)) - \max_{y \in \cA_2\setminus\{ y(\vxstar; G)\}} U_2(\vx', y)$. 
\begin{align}
    \swapregret &\geq \sum_{t=1}^T {\delta^T} c \cdot Z^t \|\vy^t - \vy(\vxstar; G)\| \nonumber\\
    \E[\swapregret] &\geq \sum_{t=1}^T {\delta^T} c\cdot \E[Z^t] \cdot\E[\|\vy^t - \vy(\vxstar; G)\|] \tag{Independence between $q^t$ and $\ptwo$'s action}\nonumber\\
    &\geq  {\delta^T} c\cdot\left(1-o_T(1)\right) \cdot\E\left[\sum_{t=1}^T \|\vy^t - \vy(\vxstar; G)\| \right]\tag{Successful learning criteria} \nonumber\\
    \E[\swapregret] &\in O(T^a ) 
    \tag{$\pi_2$'s swap regret bound}
    \nonumber\\
    \Longrightarrow \left[\sum_{t=1}^T \|\vy^t - \vy(\vxstar; G)\|\right] &\in O(T^a / {\delta^T}).
    \label{eq:ttmp}
\end{align}

$\pone$'s cumulative utility satisfies:
\begin{align*}
    \E \left [ \sum_{t=1}^T U_1(\vx^t, \vy^t;G) \right ] &\geq \sum_{t=1}^T \E[Z^t] \left ( U_1(\vx'_{\delta^T}, y(\vxstar;G);G) - c_1 \E[\|\vy^t - \vy(\vxstar;G) \|_1] \right ) + c_2 \sum_{t=1}^T  \E[1 - Z^t] \\
    &\geq \sum_{t=1}^T \E[Z^t] \left ( \stackval_1(G) - {\delta^T} c_3 - c_1 \E[\|\vy^t - \vy(\vxstar;G) \|_1] \right ) + c_2 \sum_{t=1}^T  \E[1 - Z^t],
\end{align*}
where 
\begin{align*}
    c_1 &= U_1(\vx'_{\delta^T}, \vy(\vxstar;G);G) - \min_{y \in \cA_2\setminus\{y(\vxstar; G)\}} U_1(\vx'_{\delta^T}, y;G),\\
    c_2 &= \min_{x \in \cA_1, y \in \cA_2}U_1(x,y),\\
    c_3 &= \stackval_1(G) - U_1(\bar{\vx}, y(\vxstar;G);G).
\end{align*}
Finally, we use \Cref{eq:ttmp} and $\E[Z^t]=1-o_t(1)$ to conclude that
\begin{align}
    \E \left [ \sum_{t=1}^T U_1(\vx^t, y^t) \right ] &\geq \stackval_1(G) \cdot T  - c_3 {\delta^T} T -  c_1 O(T^a)/{\delta^T} - o(T).
    \label{eq:tmp-many-terms}
\end{align}

In \cref{eq:tmp-many-terms}, the second term $c_3 {\delta^T} T$ comes from playing action $\bar{\vx}$ instead of $\vx$ to induce a unique best response; the third term $c_1 O(T^a)/{\delta^T}$ comes from the swap regret of $\ptwo$, and the last $o(T)$ term comes from errors in the external signals $Z_t$.
By choosing ${\delta^T} = T^{-b}$ for some $0 < b < 1 - a$, $\pi_1$ yields $\stackval_1(G)\cdot T - o(T)$ utility to $\pone$. In the special case where the last $o(T)$ term is zero (e.g., when $\pone$ has perfect knowledge about the game $G$ as in \Cref{fullAsymClaim1}), we can achieve the optimal tradeoff by setting $b$ to be $\frac{1-a}{2}$, which yields a convergence rate of $O(T^{\frac{1+a}{2}})$ to $\stackval_1(G)$.

The doubling trick to construct $\pi_1$ that does not rely on the time horizon being known: Initialize a maximum horizon $T_m$. Until the round index $t$ hits $T_m$, employ $\vx'_{\delta^{T_m}}$. Once $t = T_m$, update $T_m \leftarrow 2 T_m$ and employ $\vx'_{\delta^{T_m}}$ until the round index exceeds the new max horizon.

\begin{remark}[Rate of convergence to Stackelberg benchmark.] \label{remark:rateStack} Algorithms that provide swap-regret rates of $O(T^{1/2})$ are known~\citep{blum2007external,stoltz2005internal}. By setting $a=\frac{1}{2}$ and $b=\frac{1-a}{2}$, our proof shows that these algorithms lie in a PNE of the meta-game that results in a $O(T^{3/4})$ convergence rate to $\pone$'s Stackelberg benchmark.
\end{remark}

\subsection{Proof of \Cref{AgentSignalLowerBound}}
\label{app:proof-claim2}

\begin{figure}[ht]
            \centering
            \begin{minipage}{0.4\textwidth}
                \centering
                \begin{tabular}{l|cc|cc|}
                \multicolumn{1}{l}{}&	\multicolumn{2}{c}{$C$}	&  \multicolumn{2}{c}{$D$} \\\cline{2-5}
        	    $A$	&	\cellcolor{gray!50}$16/\gamma$,& \cellcolor{gray!50}$1$	&	$16/\gamma$,& $- 32/\gamma$ \\\cline{2-5}
                $B$	&	$2$,& $0$	&	$0$,& $2$ \\\cline{2-5}
                \end{tabular}
                \caption*{Game Matrix $G_1$}
            \end{minipage}
            \hspace{0.1\textwidth} %
            \begin{minipage}{0.4\textwidth}
                \centering
                \begin{tabular}{l|cc|cc|}
                \multicolumn{1}{l}{}&	\multicolumn{2}{c}{$C$}	&  \multicolumn{2}{c}{$D$} \\\cline{2-5}
        	    $A$	&	$1$,& $1$	&	$0$,& $- 32/\gamma$ \\\cline{2-5}
                $B$	&	$0.9$,& $0$	&	\cellcolor{gray!50}$0.1$,& \cellcolor{gray!50}$2$ \\\cline{2-5}
                \end{tabular}
                \caption*{Game Matrix $G_2$}
            \end{minipage}
            \caption{\footnotesize game matrices $G_1$ and $G_2$. $\pone$ is the row player and $\ptwo$ is the column player. The values in each cell are ($\pone$'s utility, $\ptwo$'s utility).
            Shaded cells represent the action profiles supported in the Stackelberg equilibria led by the column player. The parameter $\gamma$ is defined as $\frac{1-\pstar}{1+\pstar}\in(0,1]$, where $\pstar$ is the precision threshold of $p_2$. }
        \end{figure}

We prove this theorem for every fixed signal precision $p_2=p\le\pstar$.
Before diving into the proof, we will first introduce some notations. For $(i,j)\in\{1,2\}\times\{1,2\}$, we use $\csp_{ij}$ to denote the CSP induced by $\T^T(\pi_1,\pi_2;s_1=G_i,s_2=G_j, G=G_i)$. Note that we have taken $s_1$ and $G$ to both be $G_i$ because $s_1\equiv G$ when $p_1=1$. 
Recall that $\T^T(\pi_1,\pi_2;G_i,G_j,G_i)$ is the distribution over trajectories of length $T$ (denoted as $\tau=(\x^t,\y^t)_{t\in[T]}$) generated by the pair of algorithms $(\pi_1,\pi_2)$ when $\pi_1$ takes input signal $s_1=G_i$ and $\pi_2$ takes input signal $s_2=G_j$. Formally, we have
\begin{align*}
    \csp_{ij}\triangleq \csp_{\T^T(\pi_1,\pi_2;G_i,G_j,G_i)}
    =\E_{\tau\sim \T^T(\pi_1,\pi_2;G_i,G_j,G_i)}\left[\frac{1}{T}\sum_{t=1}^T\vx_t\otimes\vy_t\right].
\end{align*}
In addition, for $i\in\{1,2\}$, we use $\csp_i$ to denote the CSP generated by the trajectory distribution $\T^T(\pi_1,\pi_2;G_i)$. Since $s_1$ perfectly correlates with $G$, and $s_2$ is drawn from the signal distribution $\varphi_{p_2}(s|G)=\frac{1+p_2}{2}\cdot\1{s=G}+\frac{1-p_2}{2}\cdot\1{s\neq G}$, we can equivalently write $\csp_1$ and $\csp_2$ as
\begin{align*}
    \csp_1&=\frac{1+p}{2}\csp_{11}+\frac{1-p}{2}\csp_{12};\\
    \csp_2&=\frac{1-p}{2}\csp_{21}+\frac{1+p}{2}\csp_{22}.\\
\end{align*}

As argued in the proof sketch, we will establish the following three claims about $\csp_1$ and $\csp_2$.
\begin{claim}
    If $(\pi_1,\pi_2)$ forms an equilibrium in the algorithm space, then $\csp_1(B,D)\le\frac{\gamma}{8}+o_T(1)$.
    \label[claim]{claim1}
\end{claim}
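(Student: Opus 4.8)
\textbf{Proof plan for \Cref{claim1}.}
The plan is to show that if $\csp_1(B,D)$ were not small, then $\pone$ would have a profitable deviation from $\pi_1$, contradicting the equilibrium assumption. Recall $\csp_1 = \frac{1+p}{2}\csp_{11} + \frac{1-p}{2}\csp_{12}$ is the CSP induced when $G=G_1$. Since $s_1=G$ when $p_1=1$, the algorithm $\pi_1$ receives signal $s_1=G_1$ in both contributing terms, so the relevant deviation is for $\pone$ to switch to an alternative algorithm $\pi_1'$ that \emph{on signal $G_1$} always plays the pure action $A$ (keeping its behavior on signal $G_2$ unchanged). First I would write $\pone$'s expected average utility under the realized game $G_1$ in terms of $\csp_1$: using the payoff matrix of $G_1$, $\pone$ gets $16/\gamma$ on every cell involving action $A$ (regardless of $\ptwo$'s column), $2$ on $(B,C)$, and $0$ on $(B,D)$. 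Hence
\begin{align*}
\Bar{U}_1^T(\pi_1,\pi_2;G_1) = \tfrac{16}{\gamma}\bigl(\csp_1(A,C)+\csp_1(A,D)\bigr) + 2\,\csp_1(B,C) + 0\cdot\csp_1(B,D).
\end{align*}

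Next I would compute $\pone$'s utility after the deviation to $\pi_1'$. The key structural point is the game-revealing property already noted in the proof of the proposition in \Cref{sec:partial-asym}: when $\pone$ plays action $A$, her own observed utility is $16/\gamma$ in $G_1$ but $\le 1$ in $G_2$, so $\pone$ always knows which game is realized after round one; but playing $A$ on signal $G_1$ does not leak which game is realized to $\ptwo$ (the column player's payoffs differ between games, but the subsequent analysis only needs that $\pi_1'$ is a valid algorithm). Since $\pi_1'$ always plays $A$ when $G=G_1$, the induced CSP puts all its mass on the top row, giving $\pone$ exactly $16/\gamma$ in $G_1$. Comparing, the equilibrium (no-profitable-deviation) condition for $\pone$ at realized game $G_1$ — more precisely, the condition on $\Bar{U}_1^T(\cdot\,;\cD)$ with $\cD$ uniform, combined with the fact that $\pi_1'$ agrees with $\pi_1$ on signal $G_2$ so the $G_2$-term cancels — yields
\begin{align*}
\tfrac{16}{\gamma} - o_T(1) \le \tfrac{16}{\gamma}\bigl(\csp_1(A,C)+\csp_1(A,D)\bigr) + 2\,\csp_1(B,C).
\end{align*}
Using $\csp_1(A,C)+\csp_1(A,D)+\csp_1(B,C)+\csp_1(B,D)=1$ and bounding $2\,\csp_1(B,C)\le 2\bigl(1-\csp_1(A,C)-\csp_1(A,D)-\csp_1(B,D)\bigr)$, I would rearrange to isolate $\csp_1(B,D)$, obtaining $(\tfrac{16}{\gamma}-2)\,\csp_1(B,D) \le 2 + o_T(1)$, hence $\csp_1(B,D)\le \frac{2+o_T(1)}{16/\gamma - 2}\le \frac{\gamma}{8}\cdot\frac{1}{1-\gamma/8} \le \frac{\gamma}{8}+o_T(1)$ for $\gamma\in(0,1]$ after absorbing constants; a small amount of care with the exact constant is needed but it is routine.

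The main obstacle — and the only genuinely delicate point — is making the deviation argument precise in the \emph{partial}-information setting: because $\ptwo$'s signal distribution $\varphi_{p_2}(\cdot\mid G_2)$ does not change when $\pone$ deviates, the deviation $\pi_1'$ must be defined carefully (unchanged on signal $G_2$, "always play $A$" on signal $G_1$) so that the $G=G_2$ contribution to $\Bar{U}_1^T(\pi_1',\pi_2;\cD)$ is literally identical to that of $\pi_1$, and only the $G=G_1$ contribution changes. Once that bookkeeping is set up, the inequality $\Bar{U}_1^T(\pi_1',\pi_2;\cD)\le \Bar{U}_1^T(\pi_1,\pi_2;\cD)+o_T(1)$ reduces exactly to the displayed two-line inequality above in terms of $\csp_1$ alone, and the rest is arithmetic. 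I would also double-check that "always play $A$" is an admissible algorithm (it is — no weak-dominance assumption is violated since we are constructing a deviation, not restricting the equilibrium), and that $A$ is indeed $\pone$'s utility-maximizing row in $G_1$ uniformly over $\ptwo$'s play, which is immediate from the matrix.
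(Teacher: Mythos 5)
Your deviation argument is essentially sound and takes a slightly different---and in fact tighter---route than the paper. The paper's proof also deviates to ``always play $A$,'' but \emph{unconditionally} on the signal: it lower-bounds the deviator's utility by $\frac{1}{2}\cdot\frac{16}{\gamma}$ (discarding the $G_2$-term via nonnegativity of $\pone$'s payoffs there) and upper-bounds the equilibrium utility by $\frac{1}{2}\left(1-\csp_1(B,D)\right)\frac{16}{\gamma}+\frac{1}{2}$; the threshold $\gamma/8$ is exactly what is needed to absorb the $+\frac{1}{2}$ slack coming from those crude bounds on the $G_2$ side. Your signal-dependent deviation (play $A$ only on $s_1=G_1$, unchanged on $s_1=G_2$) makes the $G=G_2$ contributions cancel exactly---legitimate because $p_1=1$ so $s_1\equiv G$ and $\ptwo$'s signal distribution is unaffected---so the PNE condition reduces to a comparison purely in terms of $\csp_1$. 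This is cleaner and, done correctly, proves strictly more than the claim.

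However, your final arithmetic is wrong as written. From your displayed inequality $\frac{16}{\gamma}-o_T(1)\le\frac{16}{\gamma}\bigl(\csp_1(A,C)+\csp_1(A,D)\bigr)+2\,\csp_1(B,C)$, the tight rearrangement (the right side is at most $\frac{16}{\gamma}\bigl(1-\csp_1(B,D)\bigr)$ since $2\le 16/\gamma$) gives $\frac{16}{\gamma}\,\csp_1(B,D)\le o_T(1)$, i.e.\ $\csp_1(B,D)\in o_T(1)$, which is stronger than the claimed $\gamma/8+o_T(1)$. Your looser chain instead ends with $\csp_1(B,D)\le\frac{2}{16/\gamma-2}+o_T(1)=\frac{\gamma}{8-\gamma}+o_T(1)$, and the asserted step $\frac{\gamma}{8}\cdot\frac{1}{1-\gamma/8}\le\frac{\gamma}{8}+o_T(1)$ is false: the excess $\frac{\gamma^2}{8(8-\gamma)}$ is a constant in $T$, not $o_T(1)$, so that last inequality does not deliver the stated bound. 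The fix is immediate---rearrange tightly as above rather than discarding the $\frac{16}{\gamma}\csp_1(B,C)$ term---after which your proof is complete and subsumes the paper's bound for this claim.
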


\begin{claim}
    If $(\pi_1,\pi_2)$ forms an equilibrium in the algorithm space, then $\csp_1(A,D)\le\frac{\gamma}{8}+o_T(1)$.
    \label[claim]{claim2}
\end{claim}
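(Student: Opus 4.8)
\textbf{Proof plan for Claim~\ref{claim2}.}

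The plan is to use the equilibrium condition for $\ptwo$: if $\csp_1(A,D)$ were large, then in game $G_1$ action $D$ is very costly for $\ptwo$ (utility $-32/\gamma$ in cell $(A,D)$), so $\ptwo$ would profit by deviating to the algorithm $\pi_2'$ that always plays the pure action $C$ regardless of its signal. Concretely, I would first write $\ptwo$'s expected average utility under $(\pi_1,\pi_2)$ as the prior-weighted combination
\begin{align*}
\Bar{U}_2^T(\pi_1,\pi_2;\cD)=\tfrac12\,\E_{(x,y)\sim\csp_1}U_2(x,y;G_1)+\tfrac12\,\E_{(x,y)\sim\csp_2}U_2(x,y;G_2),
\end{align*}
and similarly for the deviation $\pi_2'$ that plays $C$ always. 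Because the two games are constructed so that $\ptwo$'s utility matrix is \emph{identical} in $G_1$ and $G_2$ (only $\pone$'s entries differ), the deviation to ``always play $C$'' yields $\ptwo$ a guaranteed per-round utility of exactly $1$ in every round of every realization (the $C$-column gives $\ptwo$ utility $1$ against $A$ and $0$ against $B$, but against $\pi_1$'s actual play of rows — hmm, more carefully: playing $C$ gives $\ptwo$ utility $U_2(\cdot,C)\in\{0,1\}$ depending on $\pone$'s row, so I actually want a cleaner bound). The right way: playing $C$ always is weakly better for $\ptwo$ than playing according to $\csp_1$ in $G_1$ precisely when the probability mass $\csp_1(A,D)$ on the disastrous $-32/\gamma$ cell, together with the mass on $(B,D)$ where $\ptwo$ gets $2$ (not bad) versus $(B,C)$ where she gets $0$, nets out unfavorably. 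So I would isolate the $D$-column contributions and show: if $\csp_1(A,D)>\gamma/8+o_T(1)$, then the loss $32/\gamma\cdot\csp_1(A,D) > 4$ per round swamps any gain from the other cells (all of which are $O(1)$), making the constant-$C$ deviation strictly profitable by $\Omega_T(1)$, contradicting that $(\pi_1,\pi_2)$ is a PNE.

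The one subtlety to handle carefully, exactly as flagged in the proof sketch, is the partial-information distortion: a deviation by \emph{$\ptwo$} to $\pi_2'$ does not change $\pone$'s behavior or the signal distributions, so the post-deviation CSP in $G_1$ is simply the CSP induced by $(\pi_1,\pi_2')$ with $\pi_1$ still receiving $s_1=G_1$; since $\pi_2'$ ignores its signal, this collapses cleanly and there is no amplification factor $\frac{1+p}{1-p}$ to worry about for \emph{this particular} deviation (the amplification issue arose for $\pone$'s deviations, not $\ptwo$'s constant-action deviations). Thus the deviation analysis for Claim~\ref{claim2} is genuinely one-sided and only involves $\csp_1$ and the fixed payoff $1$ from column $C$. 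I would compute $\E_{(x,y)\sim\csp_1}U_2(x,y;G_1)$ using $U_2(A,C)=U_2(B,D)=\dots$ wait, $U_2(B,D;G_1)=2$ and $U_2(A,C;G_1)=1$, so I need the gain from the $(B,D)$ mass to be outweighed. The inequality to establish is roughly
\begin{align*}
\underbrace{1\cdot(\text{mass on }C\text{-column})+2\cdot\csp_1(B,D)+(-32/\gamma)\cdot\csp_1(A,D)}_{=\,\E_{\csp_1}U_2(\cdot;G_1)}\ \le\ 1 - \Omega_T(1)
\end{align*}
whenever $\csp_1(A,D)>\gamma/8+o_T(1)$, using $\csp_1(B,D)\le\gamma/8+o_T(1)$ from Claim~\ref{claim1} to cap the favorable term — so \textbf{Claim~\ref{claim1} is a prerequisite} and should be invoked. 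Since $2\cdot\frac{\gamma}{8}=\frac{\gamma}{4}\le\frac14$ while $\frac{32}{\gamma}\cdot\csp_1(A,D)>4$, the net is $\le 1+\frac14-4+o_T(1)<1-\Omega_T(1)$, giving the contradiction.

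The main obstacle I anticipate is bookkeeping the $o_T(1)$ and $\Omega_T(1)$ terms consistently through the equilibrium inequality of \Cref{def:PNE} — in particular making sure that ``$\ptwo$ achieves within $o_T(1)$ of the best deviation'' is correctly combined with the benchmark hypothesis, and that the constant $\gamma/8$ in the claim is exactly what the arithmetic forces (I would reverse-engineer the threshold $\gamma/8$ from requiring $32/\gamma \cdot (\gamma/8) = 4$ to strictly dominate the $O(1)$ slack, which is why the bound is stated as $\gamma/8$ rather than some other fraction). Everything else — rewriting utilities via CSPs, the constant-action deviation, the collapse of the partial-information distortion for $\ptwo$'s deviation — is routine given the setup and Claim~\ref{claim1}.
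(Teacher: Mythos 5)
Your proposal follows essentially the same route as the paper: assume $\csp_1(A,D)$ exceeds $\gamma/8$, use the $-32/\gamma$ payoff in cell $(A,D)$ to drive $\ptwo$'s equilibrium utility down, and derive a contradiction from the deviation to the constant algorithm that always plays $C$; your observation that no $\frac{1+p}{1-p}$ amplification arises for $\ptwo$'s deviation is also correct. Two small repairs are needed, though neither requires a new idea. First, the always-$C$ deviation does \emph{not} guarantee $\ptwo$ a per-round utility of $1$: $\pi_1$ is adaptive, so after the deviation $\pone$ may play row $B$ often, in which case $\ptwo$ gets $0$; the only robust guarantee is nonnegativity, so the comparison bar in your final inequality must be $0$, not $1-\Omega_T(1)$. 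Your arithmetic already lands well below $0$ (indeed below $-2$), so the contradiction survives, and this is exactly how the paper argues: it shows the \emph{total} equilibrium utility is strictly negative while the deviation yields utility $\ge 0$. Second, the PNE condition in \Cref{def:PNE} compares total $\cD$-expected utilities, so you cannot conclude from a bound on the $\csp_1/G_1$ component alone; you must fold in the $\csp_2/G_2$ component, which is at most $2$ before the deviation and at least $0$ after, so after the $\tfrac12$ weights the total is still strictly negative versus nonnegative. Finally, note that invoking \Cref{claim1} is harmless but unnecessary: bounding every cell other than $(A,D)$ by $2$ (as the paper does) already makes the $(A,D)$ loss of magnitude $\tfrac{32}{\gamma}\cdot\tfrac{\gamma}{8}=4$ dominate.
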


\begin{claim}
    If 
    \Cref{claim2} holds and
    $(\pi_1,\pi_2)$ grants $\ptwo$ the expected Stackelberg value, i.e.,
    \[\liminf_{T\to\infty}\Bar{U}_2(\pi_1,\pi_2;\cD)\ge\frac{3}{2},\]
    then $\csp_2(B,D)\ge 1/2+o(1)$.
    \label[claim]{claim3}
\end{claim}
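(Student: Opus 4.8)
The plan is to lower-bound $\ptwo$'s expected utility by a linear functional of $\csp_1(A,C)$, $\csp_1(A,D)$, $\csp_2(B,D)$ (and the remaining mass), using that in both games $\ptwo$'s payoff matrix is identical, and then use the benchmark hypothesis $\liminf_T \bar U_2 \ge 3/2$ together with \Cref{claim2} to force $\csp_2(B,D)$ to be close to $1$ (hence certainly $\ge 1/2 + o(1)$). First I would record that $\ptwo$'s payoffs are the same in $G_1$ and $G_2$: $U_2(A,C)=1$, $U_2(A,D)=-32/\gamma$, $U_2(B,C)=0$, $U_2(B,D)=2$. Since $\bar U_2(\pi_1,\pi_2;\cD) = \tfrac12 \E_{(x,y)\sim\csp_1}U_2(x,y) + \tfrac12\E_{(x,y)\sim\csp_2}U_2(x,y)$, and each of these expectations is a convex combination over the four cells, I can write both terms explicitly as linear functions of the CSP cell masses.

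The key quantitative step: on each CSP the utility is a weighted average of the four cell values, the largest of which is $U_2(B,D)=2$, with $U_2(A,C)=1$ next and the others $\le 0$. For the average to reach $3/2$, a large fraction of the total mass (across both CSPs, counted with weight $1/2$ each) must sit on $(B,D)$ or $(A,C)$, and in fact most of it on $(B,D)$. Concretely, $\E_{\csp_i}U_2 \le 2\,\csp_i(B,D) + \csp_i(A,C) + 0\cdot\csp_i(B,C) - (32/\gamma)\csp_i(A,D)$, so $\tfrac12\sum_i \E_{\csp_i}U_2 \le \csp_1(B,D)+\csp_2(B,D) + \tfrac12(\csp_1(A,C)+\csp_2(A,C)) - \tfrac{16}{\gamma}(\csp_1(A,D)+\csp_2(A,D))$. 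Using $\csp_i(A,C)+\csp_i(B,D) \le 1$ (as probabilities) to bound $\csp_i(A,C) \le 1 - \csp_i(B,D)$, I get the bound $\le 1 + \tfrac12(\csp_1(B,D)+\csp_2(B,D)) - \tfrac{16}{\gamma}(\csp_1(A,D)+\csp_2(A,D))$, and the hypothesis $\liminf_T \bar U_2 \ge 3/2$ then yields $\csp_1(B,D)+\csp_2(B,D) \ge 1 + \tfrac{32}{\gamma}(\csp_1(A,D)+\csp_2(A,D)) - o(1) \ge 1 - o(1)$. By \Cref{claim2}, $\csp_1(B,D) \le \gamma/8 + o_T(1)$ is small (more precisely I should also invoke \Cref{claim1}, or just the trivial bound that $\csp_1(B,D)$ is bounded by a small quantity; note Claim 3's statement only lists Claim 2 as a hypothesis, so I must be careful and instead bound $\csp_1(B,D) \le 1$ crudely if Claim 1 is unavailable — actually the cleaner route is to keep $\csp_1(B,D)$ as is and note $\csp_2(B,D) \ge 1 - \csp_1(B,D) - o(1)$, which is $\ge 1/2 + o(1)$ as long as $\csp_1(B,D) \le 1/2 - \Omega(1)$; Claim 1 gives $\csp_1(B,D) \le \gamma/8 + o_T(1) \le 1/8 + o_T(1)$, comfortably below $1/2$). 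So I would invoke \Cref{claim1} in addition, which is legitimate since it is an established claim.

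The main obstacle is bookkeeping the $o_T(1)$ terms correctly: $\bar U_2$ only converges in the $\liminf$ sense, the CSPs depend on $T$, and \Cref{claim1,claim2} hold only up to $o_T(1)$, so I need to pass to a subsequence $T_k$ realizing the $\liminf$, apply the claims along it, and conclude $\csp_2(B,D) \ge 1/2 + o(1)$ along that subsequence — which suffices since the contradiction in the main proof is also derived along such a subsequence. A secondary subtlety is making sure the coefficient bookkeeping with $\gamma$ is monotone in the right direction: the $(A,D)$ penalty $-32/\gamma$ only helps (it can only push $\csp(B,D)$ higher), so dropping it or bounding it crudely is safe. No delicate inequality is needed beyond convexity of utility over the simplex of cells and the probability constraint $\sum_{\text{cells}} \csp_i = 1$.
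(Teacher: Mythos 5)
Your proposal is correct and follows essentially the same route as the paper's proof: bound $U_2$ by $2$ on the $(B,D)$ cell and by $1$ elsewhere (the $-32/\gamma$ penalty on $(A,D)$ only helps and can be dropped), deduce $\csp_1(B,D)+\csp_2(B,D)\ge 1-o_T(1)$ from the benchmark hypothesis, and finish with the $\gamma/8$ bound on $\csp_1(B,D)$. You are also right that this last bound is the content of \Cref{claim1} rather than \Cref{claim2} — the paper's statement and proof cite \Cref{claim2} at that step, which appears to be a labeling slip — so your explicit invocation of \Cref{claim1} is the appropriate fix.
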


We will first build the proof of \Cref{AgentSignalLowerBound} on these claims, and then formally establish these claims in \cref{sec:proof-of-claims}.

        Assume that $(\pi_1,\pi_2)$ is a PNE in the meta-game that lets $\ptwo$ achieve the benchmark $\stackval(\cD)=\frac{3}{2}$. We show that if $(\pi_1,\pi_2)$ satisfies Claims~\ref{claim1} through \ref{claim3}, then $\pone$ gains utility by deviating to an algorithm $\pi_1'$ that always plays according to $G_1$. 
        
        Formally, consider $\pi_1'$ defined as follows. For any history time step $t>0$ and any history $H^r_1$ ($r>0$), 
        \begin{align*}
            \pi_1'(G,H_1^{1:t-1})\triangleq\pi_1(G_1,H_1^{1:t-1}),\quad\forall G\in\{G_1,G_2\}.
        \end{align*}
        After the deviation, $\pone$'s average utility can be expressed as
        \begin{align*}
             \Bar{U}_1^T(\pi_1',\pi_2)
             &=\frac{1}{2}\Bar{U}_1(\pi_1,\pi_2;G_1)+
             \frac{1-p}{4}\E_{\tau\sim \T^T(\pi_1',\pi_2;G_2,G_1,G_2)}\left[\frac{1}{T}\sum_{t=1}^T
             U_1(\x_t,\y_t;G_2)\right]\\
             &\qquad\qquad+\frac{1+p}{4}\E_{\tau\sim \T^T(\pi_1',\pi_2;G_2,G_2,G_2)}\left[\frac{1}{T}\sum_{t=1}^T
             U_1(\x_t,\y_t;G_2)\right].
        \end{align*}
        Since $\pi_1'$ is defined to behave according to $\pi_1$ on observing $G_1$, the above equation can be rewritten as
        \begin{align*}
            \Bar{U}_1(\pi_1',\pi_2)
             &=\frac{1}{2}\Bar{U}_1(\pi_1,\pi_2;G_1)+
             \frac{1-p}{4}\E_{(x,y)\sim \csp_{11}}\left[
             U_1(x,y;G_2)\right]\\
             &\qquad\qquad+\frac{1+p}{4}\E_{(x,y)\sim \csp_{12}}\left[
             U_1(x,y;G_2)\right].
        \end{align*}
        Now we define a new CSP to be the mixture of $\csp_{11}$ and $\csp_{12}$ but $\csp_{12}$ is taking up more probability mass than $\csp_{11}$:
        \begin{align*}
            \csp_{1}'\triangleq\frac{1-p}{2}\csp_{11}+\frac{1+p}{2}\csp_{12}.
        \end{align*}
        Note that compared with the $\csp_1$ defined before (repeated below)
        \begin{align*}
            \csp_{1}=\frac{1+p}{2}\csp_{11}+\frac{1-p}{2}\csp_{12},
        \end{align*}
        we can conclude that for any action pair $(x,y)\in\{A,B\}\times\{C,D\}$, 
        \begin{align*}
            \csp_1'(x,y)\le\frac{1+p}{1-p}\csp_1(x,y)
            \le\frac{1+\pstar}{1-\pstar}\csp_1(x,y)=\frac{1}{\gamma}\csp_1(x,y),
        \end{align*}
        where the second step used $p\le\pstar$ and the last step uses the definition of $\gamma=\frac{1-\pstar}{1+\pstar}$.
        Combined with the upper bounds on $\csp_1(B,D)$ and $\csp_1(A,D)$ from \Cref{claim1} and \Cref{claim2}, we have
        \begin{align}
            \csp_1'(B,D)+\csp_1'(A,D)\le\frac{1}{\gamma}\left(\csp_1(B,D)+\csp_1(A,D)\right)\le\frac{1}{4}+o_T(1).
            \label{eq:lb-csp-1prime}
        \end{align}

        Finally, we lower bound the increase in $\pone$'s utility after deviating from $\pi_1$ to $\pi_1'$ as follows. Since
        \begin{align*}
            \Bar{U}_1^T(\pi_1',\pi_2)-\Bar{U}_1^T(\pi_1',\pi_2)&=
            \frac{1}{2}\left(\E_{(x,y)\sim \csp_{1}'}\left[
             U_1(x,y;G_2)\right]
             -\E_{(x,y)\sim \csp_{1}}\left[
             U_1(x,y;G_2)\right]
             \right)\\
        \end{align*}
        it suffices to lower bound the difference in utility when $G_2$ is realized. 
        
        On the one hand, note that $U_1(x,y;G_2)\ge0.9$ as long as $(x,y)\neq(A,D)$ or $(B,D)$, we have
        \begin{align*}
            \E_{(x,y)\sim \csp_{1}'}\left[
             U_1(x,y;G_2)\right]&\ge
             (1-\csp_1'(A,D)-\csp_1'(B,D))\cdot0.9\\
             &\ge \left(\frac{3}{4}-o_T(1)\right)\cdot0.9
             >0.6-o_T(1).
             \tag{from \cref{eq:lb-csp-1prime}}
        \end{align*}
        On the other hand, since $U_1(x,y;G_2)\le0.1$ when $(x,y)=(B,D)$ and $U_1(x,y;G_2)\le1$ otherwise, we have
        \begin{align*}
            \E_{(x,y)\sim \csp_{1}}\left[
             U_1(x,y;G_2)\right]&\le
             \csp_1(B,D)\cdot0.1+(1-\csp_1(B,D))\cdot1\\
             &\le \frac{1}{2}\cdot1.1+o_T(1)
             \tag{$\csp_1(B,D)\ge\frac{1}{2}+o_T(1)$ from \Cref{claim3}}\\
             &<0.6+o_T(1).
        \end{align*}
        As a result, after taking their difference, we have that in the asymptotic regime
        \begin{align*}
            \limsup_{T\to\infty}\left(
            \Bar{U}_1^T(\pi_1',\pi_2;\cD)-\Bar{U}_1^T(\pi_1',\pi_2;\cD)
            \right)>0,
        \end{align*}
        which contradicts with the assumption that $(\pi_1,\pi_2)$ forms a PNE (cf. \Cref{def:PNE}) in the meta-game! 
        Therefore, it cannot be possible for any PNE $(\pi_1,\pi_2)$ of the meta game to achieve the benchmark $\stackval_2(\cD)$ for $\ptwo$. The proof is thus complete.

\subsection{Proof of Technical Claims}
\label{sec:proof-of-claims}
\begin{proof}[Proof of \Cref{claim1}]
We prove this claim by contradiction. Assume that the above claim does not hold, i.e.,
\begin{align*}
    \limsup_{T\to\infty}\csp_1((B,D))>\frac{\gamma}{8},
\end{align*}
we will show that $\pone$ gains utility by deviating to another algorithm $\pi_1'$ that always plays action $A$ regardless of the signals and the feedbacks observed. We have
\begin{align*}
    &\quad \liminf_{T\to\infty}
    \Bar{U}_1^T(\pi_1,\pi_2;\cD)\\
    &=\liminf_{T\to\infty}\left(\frac{1}{2}\Bar{U}_1^T(\pi_1,\pi_2;G_1)
    +\frac{1}{2}\Bar{U}_1^T(\pi_1,\pi_2;G_2)\right)\\
    &=\liminf_{T\to\infty}\left(\frac{1}{2}\E_{(x,y)\sim\csp_1}U_1(x,y;G_1)+\frac{1}{2}\E_{(x,y)\sim\csp_2}U_1(x,y;G_2)\right)\\
    &\le \liminf_{T\to\infty}\left(\frac{1}{2}\left(\csp_1(B,D)\cdot0+\left(1-\csp_1(B,D)\right)\cdot\frac{16}{\gamma}\right)
    +\frac{1}{2}\cdot1
    \right)
    \tag{$U_1(x,y;G_1)\le\frac{16}{\gamma}$ for all $(x,y)\neq(B,D)$; $U_2(x,y;G_2)\le1$ for all $(x,y)$}
    \\
    &<\frac{1}{2}\left(1-\frac{\gamma}{8}\right)\frac{16}{\gamma}+\frac{1}{2}
    \tag{assumption that $\limsup_{T\to\infty}\csp_1((B,D))>\frac{\gamma}{8}$}\\
    &=\frac{8}{\gamma}-\frac{1}{2}.
\end{align*}
On the other hand, since $\pone$ always plays action $A$ under $\pi_1'$, which has utility $\frac{16}{\gamma}$ in $G_1$ regardless of the strategy of $\ptwo$, we have
\begin{align*}
    \limsup_{T\to\infty}\Bar{U}_1^T(\pi_1',\pi_2;\cD)\ge \limsup_{T\to\infty}\frac{1}{2}
    \Bar{U}_1^T(\pi_1',\pi_2;\cD)
    \ge\frac{1}{2}\cdot\frac{16}{\gamma}=\frac{8}{\gamma}.
\end{align*}
Combining the above two inequalities give us
\begin{align*}
    \limsup_{T\to\infty}\left(\Bar{U}_1^T(\pi_1',\pi_2;\cD)-\Bar{U}_1^T(\pi_1,\pi_2;\cD)\right)>0,
\end{align*}
which violates the requirement of PNE in \Cref{def:PNE}. Therefore, we have established the claim that $\csp_1(B,D)\le \frac{\gamma}{8}+o_T(1)$.
\end{proof}

\begin{proof}[Proof of \Cref{claim2}]
Again, assume for the sake of contradiction that $\csp_1(A,D)\le\frac{\gamma}{8}+o_T(1)$ does not hold, which implies
\begin{align*}
    \limsup_{T\to\infty}\csp_1(A,D)>\frac{\gamma}{8}.
\end{align*}
We upper bound $\ptwo$'s utility under equilibrium as
\begin{align*}
    &\quad\liminf_{T\to\infty}\Bar{U}_2^T(\pi_1,\pi_2;\cD)\\
    &=\liminf_{T\to\infty}\left(
    \frac{1}{2}\Ex_{(x,y)\sim \csp_1}\Bar{U}_2(x,y;G_1)
    +\frac{1}{2}\Ex_{(x,y)\sim \csp_2}\Bar{U}_2(x,y;G_2)\right)\\
    &\le \liminf_{T\to\infty}\left(
        \frac{1}{2}\cdot\csp_1(A,D)\cdot\left(-\frac{32}{\gamma}\right)
        +\left(1-\frac{1}{2}\cdot\csp_1(A,D)\right)\cdot 2
    \right)
    \tag{$\ptwo$'s utility is $-32/\gamma$ for $(A,D)$ and $\le 2$ for all other cells}\\
    &<\frac{\gamma}{16}\cdot\left(-\frac{32}{\gamma}\right)+\left(1-\frac{\gamma}{16}\right)\cdot1
    \tag{assumption that $\limsup_{T\to\infty}\csp_1(A,D)>{\gamma}/{8}$.}\\
    &<0.
\end{align*}
On the other hand, if $\ptwo$ deviates to an algorithm $\pi_2'$ that always plays action $C$ regardless of the signal $s_2$ and the observed feedbacks, then the average utility $\Bar{U}_2(\pi_1,\pi_2';\cD)$ is always nonnegative. As a result, we have
\begin{align*}
    \limsup_{T\to\infty}\left(\Bar{U}_2^T(\pi_1,\pi_2';\cD)-\Bar{U}_2^T(\pi_1,\pi_2,;\cD)\right)>0,
\end{align*}
which again violates the condition of $(\pi_1,\pi_2)$ being in a PNE in the meta-game. Therefore, we must have $\csp_1(A,D)\le\frac{\gamma}{8}+O_T(1)$.
\end{proof}

\begin{proof}[Proof of \Cref{claim3}]
Note that $U_2$ is the same under $G_1$ and $G_2$, for which the top-2 highest utility values are achieved by $(B,D)$ and $(A,C)$ respectively. 
Therefore, 
we can upper bound $\ptwo$'s expected average utility as
\begin{align*}
               \Bar{U}_2^T(\pi_1,\pi_2;\cD)&=
                \frac{1}{2}\Ex_{(x,y)\sim \csp_1}{U}_2(x,y;G_1)
                +\frac{1}{2}\Ex_{(x,y)\sim \csp_2}{U}_2(x,y;G_2)\\
                &\le  \frac{1}{2}\Big(\left(1-\csp_1(B,D)\right)\cdot1+\csp_1(B,D)\cdot2\Big)\\
                &\qquad\qquad+\frac{1}{2}\Big((1-\csp_2(B,D) )\cdot1+\csp_2(B,D) \cdot 2\Big)
                \\
                &=1+\frac{1}{2}\csp_1(B,D)+\frac{1}{2}\csp_2(B,D)\\
                &\le1+\frac{\gamma}{16}+\frac{1}{2}\csp_2(B,D)+o(1),
            \end{align*}
            where the last step uses \Cref{claim2}. On the other hand, from the assumption, $\Bar{U}_2(\pi_1,\pi_2;\cD)$ is lower bounded by $3/2$ as $T\to\infty$. Therefore, we obtain
            \begin{align*}
                \lim_{T\to\infty}\csp_2(B,D)\ge 1-\frac{\gamma}{8}\ge\frac{1}{2},
            \end{align*}
            which proves $\csp_2(B,D)\ge\frac{1}{2}+o(1)$.
\end{proof}

\section{Implications from prior works}\label{implicationsPriorWorks}
In this section, we provide arguments for the following claims about PNE of the meta-game that are implied either directly or indirectly by previous works. We say that an algorithm $\pi_i$ of $\playeri$ is \emph{supported in a meta-game PNE} if there exists an algorithm $\pi_{-i}$ such that the pair $(\pi_i, \pi_{-i})$ forms a PNE of the meta-game.  The claims in this section provide answers to whether common classes of algorithms such as (swap)regret-minimizing, myopically best-responding, playing Stackelberg response are always/sometimes/never supported in a meta-game PNE. 

\begin{claim}
    All no-swap-regret algorithms are supported in a meta-game PNE for all games $G$. Put another way, being no-swap-regret is a sufficient condition for an algorithm to be supported in some meta-game PNE.
\end{claim}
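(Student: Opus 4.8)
The plan is to read the claim in the complete-information meta-game---i.e.\ when $\cD$ is the point mass on the stage game $G$ (equivalently $p_1=p_2=1$), the regime of the cited prior work and the natural reading of ``for all games $G$''---and, given an arbitrary no-swap-regret algorithm $\sigma$, to exhibit a partner against which $\sigma$ is a best response while the partner itself has no profitable deviation. By symmetry it suffices to treat $\sigma=\pi_1$, an algorithm of $\pone$, the other case being obtained by exchanging the two players. The key idea is to choose the partner $\pi_2$ so that it depends only on $G$, never on $\pi_1$; then a single argument certifies \emph{all} no-swap-regret $\pi_1$ at once. Concretely, I would let $\pi_2$ be $\ptwo$'s perturbed-Stackelberg strategy, built exactly as in the proof of \Cref{prop:independentLearningStackBenchmark} with the roles of the two players exchanged: by the standing no-weak-domination assumption and Farkas' lemma choose $\bar\vy(G)\in\Delta(\calA_2)$ with $U_1(x(\vystar;G),\bar\vy(G);G)>U_1(x,\bar\vy(G);G)$ for all $x\neq x(\vystar;G)$, put $\vy'_\delta(G)=(1-\delta)\vystar(G)+\delta\,\bar\vy(G)$---so that $x(\vystar;G)$ is the \emph{unique} best response to $\vy'_\delta(G)$, with gap $\Omega(\delta)$---and have $\pi_2$ play $\vy'_{T^{-b}}(G)$ every round for a fixed $b\in(0,1-a)$, where $O(T^a)$ is $\pi_1$'s swap-regret rate, using the doubling trick of \Cref{proof:independentLearningStackBenchmark} to drop the dependence on a known horizon. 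This $\pi_2$ is legal for $\ptwo$ (it uses only her knowledge of $G$) and ignores the history.

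First I would verify that $\ptwo$ has no profitable deviation. The upper bound is the role-swapped form of the cap used in the proof of \Cref{fullAsymClaim1}: since $\pi_1$ is no-swap-regret, \citet[Theorem 6]{deng2019strategizing} gives $\Bar U_2^T(\pi_1,\pi_2';G)\le\stackval_2(G)+o_T(1)$ for every $\pi_2'$. For the matching lower bound, $\pi_1$ is no-regret (being no-swap-regret), so against the deterministic, near-fixed sequence produced by $\pi_2$ its average payoff is within $o_T(1)$ of $\max_x U_1(x,\vy'_\delta(G);G)=U_1(x(\vystar;G),\vy'_\delta(G);G)$; since $x(\vystar;G)$ is the unique best response with gap $\Omega(\delta)$ and $b<1-a$, the time-averaged weight $\pi_1$ puts on $x(\vystar;G)$ is $1-o_T(1)$, hence $\ptwo$'s average payoff is $U_2(\vy'_\delta(G),x(\vystar;G);G)-o_T(1)=\stackval_2(G)-o_T(1)$ (using $\delta=\delta^T\to 0$). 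This is exactly the computation in the proof of \Cref{prop:independentLearningStackBenchmark} with the players swapped, and $b\in(0,1-a)$ is the same balancing. Therefore $\limsup_{T\to\infty}\bigl(\Bar U_2^T(\pi_1,\pi_2';G)-\Bar U_2^T(\pi_1,\pi_2;G)\bigr)\le 0$.

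Next I would verify that $\pone$ has no profitable deviation. Since $\pi_2$ plays a strategy that is constant within each doubling epoch and never reacts to $\pone$, in the epoch where $\pi_2$ uses $\vy'_\delta(G)$ every per-round payoff of $\pone$ is at most $\max_x U_1(x,\vy'_\delta(G);G)=v_1^\star+O(\delta)$ with $v_1^\star\triangleq U_1(x(\vystar;G),\vystar(G);G)$; averaging over the geometrically growing epochs and letting $\delta\to 0$ gives $\limsup_{T\to\infty}\Bar U_1^T(\pi_1',\pi_2;G)\le v_1^\star$ for every $\pi_1'$. Conversely, $\pi_1$ is no-regret against the (deterministic) sequence played by $\pi_2$, so within each epoch $\Bar U_1^T(\pi_1,\pi_2;G)\ge\max_x U_1(x,\vy'_\delta(G);G)-o_T(1)\ge v_1^\star-O(\delta)-o_T(1)$, and averaging over epochs yields $\Bar U_1^T(\pi_1,\pi_2;G)\ge v_1^\star-o_T(1)$. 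Combining, $\limsup_{T\to\infty}\bigl(\Bar U_1^T(\pi_1',\pi_2;G)-\Bar U_1^T(\pi_1,\pi_2;G)\bigr)\le 0$, so $(\pi_1,\pi_2)$ is a PNE of the meta-game; the case in which the no-swap-regret algorithm belongs to $\ptwo$ follows by symmetry.

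I expect no serious obstacle: the only genuinely new ingredient relative to \Cref{fullAsymClaim1,prop:independentLearningStackBenchmark} is the $\pone$-side check, and it is easy because $\pi_2$ is essentially a \emph{fixed} strategy, so no-regret of $\pi_1$ already makes it near-optimal against it and no $\pi_1'$ can do better. The only part demanding care is the routine bookkeeping across doubling epochs and the two-sided $o_T(1)$ error terms, which mirrors the proof of \Cref{prop:independentLearningStackBenchmark}; the conceptual point worth highlighting is that $\pi_2$ depends on $G$ but not on $\pi_1$, which is precisely what lets one argument certify every no-swap-regret algorithm at once.
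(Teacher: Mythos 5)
Your construction is essentially the paper's own proof: the paper likewise pairs the arbitrary no-swap-regret algorithm with an opponent who plays the perturbed Stackelberg leader strategy of the realized game via the doubling trick (\Cref{proof:independentLearningStackBenchmark}), certifies the leader's side by the cap of \citet[Theorem 6]{deng2019strategizing} and the no-swap-regret player's side by convergence to the unique $\Omega(\delta)$-gap best response against an oblivious, near-fixed opponent. The only difference is your (immaterial, by symmetry) choice of which player holds the no-swap-regret algorithm, so your argument matches the paper's route.
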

\begin{proof}
    This is a direct corollary of \cite[Theorem 6]{deng2019strategizing} which can also be justified by our proof of \Cref{fullAsymClaim1}. Consider the pair of algorithms $(\pi_1,\pi_2)$ in which $\pi_1$ plays strategies close to the Stackelberg optimal strategy (via a doubling trick, see \Cref{proof:independentLearningStackBenchmark} for the full construction), and $\pi_2$ is a no-swap-regret algorithm. We have proved in \Cref{proof:independentLearningStackBenchmark} that $(\pi_1,\pi_2)$ is a PNE in the meta-game because no-swap-regret algorithms are able to cap their opponent's utility at the Stackelberg value.
\end{proof}

The second claim is stated as Theorem 2 by \citet{brown2024learning}. Since most game matrices do not have PNE, this claim effectively states that a pair of no-swap-regret algorithms cannot be PNE in the meta-game for most games.

\begin{claim}[Theorem 2, \citep{brown2024learning}]
    Unless the stage game $G$ has a PNE, any pair of two no-swap regret algorithms cannot form a PNE of the meta-game.
\end{claim}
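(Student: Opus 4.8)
The plan is to prove the contrapositive: if some pair of no-swap-regret algorithms $(\pi_1,\pi_2)$ forms a PNE of the meta-game on the (complete-information) stage game $G$ --- i.e. $\cD=\delta_G$ --- then $G$ has a pure Nash equilibrium. The first step is to pin down the equilibrium payoffs. For each $i$, since $\pi_{-i}$ is no-swap-regret, the cap of \citet[Theorem 6]{deng2019strategizing} (reproduced inside the proof of \Cref{prop:independentLearningStackBenchmark}) gives $\Bar U_i^T(\pi_i',\pi_{-i};G)\le\stackval_i(G)+o_T(1)$ for \emph{every} algorithm $\pi_i'$, hence $\limsup_{T}\Bar U_i^T(\pi_1,\pi_2;G)\le\stackval_i(G)$. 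For the reverse inequality, use $\playeri$'s deviation that commits to a Farkas-perturbed version of her optimal Stackelberg strategy $\vxstar(G)$, exactly as constructed in \Cref{proof:independentLearningStackBenchmark}; against the no-swap-regret $\pi_{-i}$ this deviation earns $\stackval_i(G)-o_T(1)$, so the PNE condition forces $\liminf_T\Bar U_i^T(\pi_1,\pi_2;G)\ge\stackval_i(G)$. Thus both players' average utilities converge simultaneously to their respective optimistic Stackelberg values.

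The second step passes to the limiting correlated strategy profile. Since $\Delta(\cA_1\times\cA_2)$ is compact, let $\csp$ be a subsequential limit of the CSPs $\csp_{\T^T(\pi_1,\pi_2;G)}$. The no-swap-regret guarantees for $\pi_1$ and $\pi_2$ translate, in the $T\to\infty$ limit, into the statement that neither player can gain from any swap deviation applied to $\csp$, i.e. $\csp$ is a correlated equilibrium of $G$; by the first step $\E_\csp[U_i]=\stackval_i(G)$ for both $i$. Conditioning $\csp$ on $\ptwo$'s realized action yields, for each $b\in\supp(\csp_2)$, a committed mixed strategy $\mu_b\in\Delta(\cA_1)$ with $b\in\BR_2(\mu_b)$; since $U_1(\mu_b,b)\le\stackval_1(G)$ for each such $b$ while the $\csp_2$-weighted average of these values equals $\stackval_1(G)$, in fact $U_1(\mu_b,b)=\stackval_1(G)$ for every $b\in\supp(\csp_2)$, so each $(\mu_b,b)$ realizes $\pone$'s optimistic Stackelberg value. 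Symmetrically, conditioning on $\pone$'s action gives for each $a\in\supp(\csp_1)$ a committed $\nu_a$ with $(a,\nu_a)$ realizing $\ptwo$'s optimistic Stackelberg value.

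The final step is to derive a contradiction with the assumption that $G$ has no pure Nash equilibrium, and this is where the real work lies. The idea is that if no pure profile in $\supp(\csp)$ is a mutual best response, then one of the players can profitably deviate in the meta-game: by committing (with a vanishing perturbation, using the no-weakly-dominated-actions assumption to make the opponent's best response unique, again as in \Cref{proof:independentLearningStackBenchmark}) to a suitable pure action that lies off every pure Nash profile but to which the opponent's forced best response lands in a strictly more favorable cell than her equilibrium mixture $\mu_b$ (resp. $\nu_a$) delivers --- strictly exceeding $\stackval_i(G)$ and contradicting the cap from the first step. Making this precise requires a careful case analysis of a correlated equilibrium $\csp$ whose support avoids all pure Nash profiles; I expect ruling out the delicate configurations where the two no-swap dynamics mutually pin each other to exactly their Stackelberg payoffs without ever coordinating on a pure equilibrium to be the main obstacle, and this is essentially the content of Brown et al.'s Theorem~2.
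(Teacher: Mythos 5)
There is a genuine gap, and you acknowledge it yourself: the third step is the theorem. For the record, the paper does not prove this claim at all --- it is imported verbatim as Theorem~2 of \citet{brown2024learning} --- so the comparison is between your sketch and a complete external proof. Your first two steps are fine and do match the standard preamble: against a no-swap-regret opponent the cap of \citet[Theorem~6]{deng2019strategizing} bounds every deviation by $\stackval_i(G)+o_T(1)$, the Farkas-perturbed Stackelberg commitment (using the no-weakly-dominated-action assumption) earns $\stackval_i(G)-o_T(1)$, so in any meta-game PNE of two no-swap-regret algorithms both players' average utilities converge to their optimistic Stackelberg values, and any subsequential limit of the CSPs is a correlated equilibrium attaining $(\stackval_1(G),\stackval_2(G))$.

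The problem is that your proposed route to a contradiction in the final step is internally inconsistent with your first step. You plan to exhibit a meta-game deviation (a perturbed commitment to some pure action) whose payoff \emph{strictly exceeds} $\stackval_i(G)$, ``contradicting the cap.'' But the cap is unconditional: it holds for \emph{every} algorithm played against a no-swap-regret opponent, whether or not $G$ has a pure Nash equilibrium, so no such deviation can exist and no contradiction can be manufactured this way. The absence of a pure Nash equilibrium must instead be contradicted by a purely game-theoretic (structural) argument: from a correlated equilibrium in which both players \emph{simultaneously} attain their optimistic Stackelberg values, one has to extract a pure action profile that is a mutual best response. That extraction --- not the dynamic/deviation bookkeeping --- is the entire content of Brown et al.'s Theorem~2, and it is exactly the piece your proposal defers (``this is essentially the content of Brown et al.'s Theorem~2''). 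As written, the argument establishes the setup but not the claim.
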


\begin{claim}
    No-regret is not a sufficient condition for an algorithm to be supported in a meta-game PNE. For common no-regret algorithms such as EXP3, there is a game where no meta-game PNE contains this algorithm.
\end{claim}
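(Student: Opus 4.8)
The plan is to first reduce the claim and then invoke the exploitation theory for mean-based learners. Since EXP3 (run with standard parameters, under bandit or full-information feedback) is a no-external-regret algorithm, the second sentence implies the first, so it suffices to exhibit a game $G^{\star}$ whose meta-game has no PNE containing EXP3; I will rule out $\pi_2=\mathrm{EXP3}$, the case $\pi_1=\mathrm{EXP3}$ being identical after transposing the utilities (or, if one wants a single game, by taking a $G^{\star}$ that is exploitable from both sides). The ingredient I would take from prior work \citep{braverman2018selling,deng2019strategizing,guruganesh2024contracting,rubinstein2024strategizing} is that EXP3 is \emph{mean-based}, that there is a game $G^{\star}$ and an algorithm $\widehat{\pi}_1$ for $\pone$ with $\limsup_{T\to\infty}\Bar{U}_1^T(\widehat{\pi}_1,\mathrm{EXP3};G^{\star})>\stackval_1(G^{\star})$, and that beating $\stackval_1(G^{\star})$ against a mean-based $\ptwo$ forces $\pone$ to steer $\ptwo$'s play through a non-stationary sequence of strategies that crosses $\ptwo$'s best-response regions; along such a trajectory $\ptwo$'s hindsight gap against the \emph{per-round} best response is $\Omega(T)$, strictly larger than her sublinear external regret. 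One may also take the exploiting algorithm oblivious, since a mean-based learner's play is essentially a deterministic function of $\pone$'s past play \citep{deng2019strategizing}.

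Given these facts, I would argue by contradiction. Suppose $(\pi_1,\mathrm{EXP3})$ is a PNE of the meta-game of $G^{\star}$. Because $\pi_1$ is a best response to EXP3, \Cref{def:PNE} and the exploitation guarantee give $\limsup_{T}\Bar{U}_1^T(\pi_1,\mathrm{EXP3};G^{\star})\ge\limsup_{T}\Bar{U}_1^T(\widehat{\pi}_1,\mathrm{EXP3};G^{\star})>\stackval_1(G^{\star})$, so $\pi_1$ is itself an exploiting strategy; hence its (WLOG oblivious) play against EXP3 is a fixed non-stationary sequence $(\x^t)_{t}$ on which EXP3 incurs $\Omega(T)$ per-round regret, i.e. $\sum_t\max_{y}U_2(\x^t,y;G^{\star})-\sum_t U_2(\x^t,\y^t;G^{\star})=\Omega(T)$. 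Now let $\ptwo$ deviate to the algorithm $\pi_2'$ that best-responds round by round to $(\x^t)_{t}$ — this is available because $\ptwo$ knows $\pi_1$ (under full-information feedback she also observes $\x^t$ before committing to round $t+1$), and for oblivious $\pi_1$ it needs no feedback at all. Then $\Bar{U}_2^T(\pi_1,\pi_2';G^{\star})=\frac1T\sum_t\max_y U_2(\x^t,y;G^{\star})$ exceeds $\Bar{U}_2^T(\pi_1,\mathrm{EXP3};G^{\star})$ by $\Omega(1)$, so $\limsup_T\big(\Bar{U}_2^T(\pi_1,\pi_2';G^{\star})-\Bar{U}_2^T(\pi_1,\mathrm{EXP3};G^{\star})\big)>0$ and $\ptwo$ has a profitable deviation, contradicting \Cref{def:PNE}. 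Therefore no PNE of $G^{\star}$'s meta-game uses EXP3. Note the mechanism: unlike the maximin-based impossibility earlier in the paper, this argument never pushes the EXP3 player below her safety value (EXP3's no-regret property forbids that); it only exploits that a \emph{fixed}-response learner is strictly worse than a per-round best-responder against the non-stationary play that exploiting EXP3 requires.

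The main obstacle — and the reason I would cite the constructions of \citep{deng2019strategizing,brown2024learning} rather than reprove them — is establishing the coupled structural property in full generality: one needs a $G^{\star}$ in which \emph{every} algorithm of $\pone$ that matches the mean-based value against EXP3, including adaptive ones, induces a trajectory whose per-round best response strictly beats EXP3's value for $\ptwo$, and one must check that the reduction to an oblivious exploiter is sound so that the deviation $\pi_2'$ is both implementable and strictly improving. Making EXP3 merely exploitable beyond $\stackval_1$ is routine; ensuring the exploitation is \emph{robustly} suboptimal for the learner against an explicitly deployable deviation — for all exploiting responses, not just the canonical one — is the delicate step, and it is exactly what the cited prior work supplies.
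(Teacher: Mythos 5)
Your skeleton matches the paper's at a high level (in any candidate PNE the opponent of EXP3 must be extracting the full mean-based exploitation value, and then the EXP3 player herself must have a profitable deviation), but the step where you construct that deviation has a genuine gap. You define $\pi_2'$ as ``per-round best respond to the sequence $(\x^t)_t$ that $\pi_1$ plays against EXP3,'' and you compute $\Bar{U}_2^T(\pi_1,\pi_2';G^\star)=\frac{1}{T}\sum_t\max_y U_2(\x^t,y;G^\star)$. This computation is only valid if $\pi_1$ is oblivious. A PNE candidate $\pi_1$ is an arbitrary adaptive algorithm: under the feedback model it observes $\ptwo$'s strategies, so once $\ptwo$ deviates from EXP3 the trajectory $(\x^t)_t$ changes --- for instance $\pi_1$ can exploit as long as $\ptwo$'s play is statistically consistent with EXP3 and switch to minimaxing $\ptwo$ otherwise. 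Your ``WLOG oblivious'' reduction does not repair this: replacing $\pi_1$ by an oblivious surrogate preserves the payoffs of the pair $(\pi_1,\mathrm{EXP3})$ but not the set of profitable deviations, and the PNE condition must be refuted for the actual $\pi_1$. Moreover, as you yourself note, EXP3's no-regret guarantee keeps her at or above her safety value, so the punishment threat cannot be dismissed by a maximin argument; and since moves are simultaneous, best-responding to the realized $\x^t$ is not even implementable against a randomized opponent except in a lagged or distributional sense. The auxiliary structural claim (that matching the exploitation value forces $\Omega(T)$ dynamic regret on EXP3) is fixable by a continuity argument on the $\delta$-approximate Stackelberg value, but it is not, as you assert, supplied off the shelf by the cited works.

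The paper's proof sidesteps exactly this robustness problem. It works in the repeated-auction game of Braverman et al.\ and uses two of their results: (i) against EXP3 (or any mean-based learner) the opponent has an algorithm extracting essentially full welfare, and (ii) there is a no-regret, never-overbidding buyer algorithm that caps the seller's revenue at the Myerson value \emph{against every seller algorithm}. Point (i) plays the role of your exploitation step; point (ii) is what replaces your $\pi_2'$: it hands the exploited EXP3 player a deviation whose guarantee is opponent-independent, hence immune to adaptive or punishing $\pi_1$, which is precisely the property your per-round-best-response deviation lacks. To make your route work you would need to construct (or cite) such an opponent-independent deviation for the EXP3 player, not merely a best response to the counterfactual trajectory.
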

\begin{proof}[Proof sketch.]
    This claim can be established by combining two claims in \citep{braverman2018selling}. 
    In their setting, a single seller repeatedly sells a single item to a single buyer for $T$ rounds. We will use two of their results:
    \begin{enumerate}
        \item (Theorem 3.1 of \citep{braverman2018selling}) If the buyer uses EXP3 or other mean-based algorithms, then there exists an algorithm for the seller which extracts (almost) full welfare.
        \item (Theorem 3.3 of \citep{braverman2018selling}) There exists an algorithm for the buyer (no-regret without overbidding), which caps the seller's revenue at the Mayerson value.
    \end{enumerate}
    Wlog, assume $\pone$'s algorithm $\pi_1$ is a mean-based no-regret algorithm such as EXP3. We will show that for any algorithm $\pi_2$ of $\ptwo$, the pair $(\pi_1,\pi_2)$ cannot be a PNE of the meta-game.
    
    Let $\pi_2'$ be the algorithm given by the above result 1 that lets $\ptwo$ extract full welfare against $\pi_1$. Let $\pi_1'$ be the algorithm that achieves the property in the above result 2.

     On the one hand, if $\Bar{U}_1^T(\pi_1,\pi_2)\le \text{Welfare}(\cD)-\Omega(T)$, then $\pone$ will increase utility by deviating to algorithm $\pi_1'$, thus $(\pi_1,\pi_2)$ cannot be a PNE in the meta-game.

     On the other hand, if $\pi_1$ is already extracting full welfare against $\pi_2$ (meaning that $\pone$ is getting asymptotically zero utility), then $\pone$ has the incentive to deviate to algorithm $\pi_2'$, under which she can cap $\ptwo$'s utility at the Myerson value and therefore  guarantee herself nonzero utility. For this reason, $(\pi_1,\pi_2)$ cannot be a PNE in the meta-game.

     Combining the above two cases, we conclude that $\pi_1$ cannot be supported in any PNE of the meta-game.
\end{proof}

\end{document}